\newtheorem{theorem}{Theorem}[section]
\newtheorem{lemma}[theorem]{Lemma}
\newtheorem{example}[theorem]{Example}
\newtheorem{remark}[theorem]{Remark}
\numberwithin{equation}{section}
\def\CR{{\cal R}} \def\CA{{\cal A}} \def\CD{{\cal D}}
\def\Z{{\Bbb Z}}
\begin{document}

\title{Self-dual 2-quasi Negacyclic Codes over Finite Fields}
\author{
Yun Fan, ~ Yue Leng\par
{\small School of Mathematics and Statistics}\par\vskip-1mm
{\small Central China Normal University, Wuhan 430079, China}
}

\date{}
\maketitle

\insert\footins{\footnotesize{\it Email address}:
yfan@mail.ccnu.edu.cn (Yun Fan);

}

\begin{abstract}
In this paper,
we investigate the existence and asymptotic property
of self-dual $2$-quasi negacyclic codes of length $2n$
over a finite field of cardinality $q$.
When $n$ is odd, we show that the $q$-ary
self-dual $2$-quasi negacyclic codes exist
if and only if  $q\,{\not\equiv}-\!1~({\rm mod}~4)$.
When $n$ is even,
we prove that the $q$-ary self-dual $2$-quasi negacyclic codes always exist.
By using the technique introduced in this paper,
we prove that $q$-ary self-dual $2$-quasi negacyclic codes
are asymptotically good.

\medskip
{\bf Key words}: Finite fields; negacyclic codes;
$2$-quasi negacyclic codes; self-dual codes;
asymptotic property.
\end{abstract}

\section{Introduction}

Let $F$ be a finite field with cardinality $|F|=q$, where $q$ is a prime power.
Any ${\bf a}=(a_0,a_1,\dots,a_{n-1})\in F^n$ is called a word over $F$
of length $n$, where $n$ is a positive integer. For ${\bf a},{\bf b}\in F^n$
the inner product $\langle{\bf a},{\bf b}\rangle =\sum_{i=0}^{n-1} a_ib_i$.
Any linear subspace $C$ of $F^n$ is called a {\em linear code}.
The fraction ${\rm R}(C)=\frac{\dim _F C}{n}$ is called the {\em rate} of $C$.
The $C^\bot
=\{{\bf a}\in F^n\,|\,\langle{\bf c},{\bf a}\rangle=0,\;\forall\,{\bf c}\in C\}$
is called the {\em dual code} of $C$.
And $C$ is said to be {\em self-dual} if~$C=C^\bot$.
It is known that ${\rm R}(C)=\frac{1}{2}$ if~$C$ is self-dual.
The minimum weight
${\rm w}(C)=\min_{0\ne {\bf c}\in C}{\rm w}({\bf c})$,
where the weight ${\rm w}({\bf c})$ of ${\bf c}=(c_0,\dots,c_{n-1})$
is defined to be the number of the indexes $i$ that $c_i\ne 0$.
The larger both the {\em relative minimum weight}
$\Delta(C)=\frac{{\rm w}(C)}{n}$ and
the {\em rate} ${\rm R}(C)$ are,
the better the coding performance of the code $C$ is.
A sequence of codes $C_1, C_2, \dots$ is said to be
{\em asymptotically good} if the code length of $C_i$ goes to infinity
and there is a real number
$\delta>0$ such that ${\rm R}(C_i)>\delta$ and $\Delta(C_i)>\delta$ for
$i=1,2,\dots$.
A class of codes is said to be {\em asymptotically good} if there is an
asymptotically good sequence of codes in the class.

It is known since long time that
the linear codes over $F$ are asymptotically good (see \cite{G52, V57}).
Let the real number $\delta\in(0,1-q^{-1})$,
then there is a sequence of linear codes
$C_1,C_2,\dots$ over $F$ with the code length going to infinity
such that $\Delta(C_i)>\delta$ and the rate
${\rm R}(C_i)$ is approximately $1-h_q(\delta)$,
where $h_q(\delta)$ is the {\em $q$-entropy function} as follows
\begin{align} \label{eq def h_q}
h_{q}(\delta)=
\delta \log_{q}(q\!-\!1)-\delta \log_{q}(\delta)-(1\!-\!\delta)\log_q(1\!-\!\delta),
 ~~~
 \delta\in[0,1\!-\!q^{-1}],
\end{align}
which value strictly increases from $0$ to $1$
while $\delta$ increases from $0$ to $1-q^{-1}$.
And, $1-h_q(\delta)$ is called the {\em GV-bound}.

A linear code $C$ of $F^n$ is called a cyclic code
if for any $(c_0,c_1,\dots,c_{n-1})$ of~$C$
the cyclically permuted word $(c_{n-1},c_0,\dots,c_{n-2})$ still belongs to $C$.
Cyclic codes are a vital class of codes (e.g., cf. \cite{HP}). However, it is a long-standing
open question (e.g., cf. \cite{MW06}): are cyclic codes over $F$ asymptotically good?
Alternatively, a variation class called
{\em quasi-cyclic codes of index $2$} are proved asymptotically good.
A linear code $C$ of $F^n\times F^n$
is called a quasi-cyclic code of index $2$,
{\em $2$-quasi cyclic code} for short,
if the following holds:
\begin{align*}
& (c_0,c_1,\dots,c_{n-1},\; c_0',c_1',\dots,c_{n-1}')\in C \\
&\implies  (c_{n-1},c_0,\dots,c_{n-2}, \; c_{n-1}',c_0',\dots,c_{n-2}')\in C.
\end{align*}
The binary $2$-quasi cyclic codes were proved asymptotically good in
\cite{CPW, C, K}. And, it was shown in \cite{L PhD}
(or cf. \cite[Theorem III.12]{FL22}) that
any $q$-ary $2$-quasi cyclic codes are asymptotically good and
attain the GV-bound.
Even more, binary self-dual $2$-quasi cyclic codes are asymptotically good,
see \cite{MW}. And,
if $q\,{\not\equiv}\,-1~({\rm mod}~4)$,
then the $q$-ary self-dual $2$-quasi cyclic codes are asymptotically good;
precisely, there are $q$-ary self-dual $2$-quasi cyclic codes $C_1,C_2,\dots$
with code length going to infinity such that
$\Delta(C_i)>\delta$,
where $h_q(\delta)<\frac{1}{4}$ is lower than the GV-bound
(note that ${\rm R}(C_i)=\frac{1}{2}$ since $C_i$ is self-dual);
see \cite{AOS, L PhD, LF22}.
The proof of \cite{AOS} is based on
the Artin's  primitive root conjecture;
while in \cite{L PhD, LF22} the conjecture is not required.
Note that ``$q\,{\not\equiv}-\!1~({\rm mod}~4)$''
is a necessary and sufficient condition for the existence of
$q$-ary self-dual $2$-quasi cyclic codes of length $2n$,
see \cite[Proposition 6.1]{LS01} for the semisimple case, 
and see \cite[Theorem~6.1]{LS03} for general case.

Let $0\neq\lambda\in F$.
A linear code $C$ of $F^n\times F^n$
is called a {\em $2$-quasi $\lambda$-constacyclic code} if the following holds:
\begin{align} \label{eq 2-Q lambda-CC}
\begin{array}{l}
 (c_0,c_1,\dots,c_{n-1},\; c_0',c_1',\dots,c_{n-1}')\in C \\
 \implies  (\lambda c_{n-1},c_0,\dots,c_{n-2}, \;
  \lambda c_{n-1}',c_0',\dots,c_{n-2}')\in C;
\end{array}
\end{align}
in particular, such $C$ is just a $2$-quasi cyclic code once $\lambda=1$.
If $\lambda=-1$ then such $C$ is called a {\em $2$-quasi negacyclic code}.
It has been shown that the $2$-quasi $\lambda$-constacyclic codes over $F$
are asymptotically good and attain the GV-bound, see \cite[Theorem III.12]{FL22}.
About the self-dual ones,
it is known 
that, if $\lambda\ne\pm 1$
then the $q$-ary self-dual $2$-quasi $\lambda$-constacyclic codes
exist if and only if $q\,{\not\equiv}-\!1~({\rm mod}~4)$,
and in that case the $q$-ary self-dual $2$-quasi $\lambda$-constacyclic codes
are not asymptotically good, see \cite[Corollary 4.3, Theorem 4.5]{FL24}.

As for the self-dual $2$-quasi negacyclic codes,
 their existence and their asymptotic property seem an interesting question.
When $q$ is even, they are just self-dual $2$-quasi cyclic codes.
In the case that $q\,{\equiv}-\!1~({\rm mod}~4)$,
Shi et al. \cite{SQS} showed, based on the Artin's primitive root conjecture, that
there are $q$-ary self-dual $2$-quasi negacyclic codes
$C_1,C_2,\dots$ with code length of $C_i$ being $4p_i$
(i.e., $n=2p_i$ in Eq.\eqref{eq 2-Q lambda-CC}) where $p_i$ are primes
going to infinity,
and $\Delta(C_i)>\delta$, where $h_q(\delta)<\frac{1}{8}$.
Under another assumption that $q$ is odd and
$q\,{\not\equiv}\pm 1~({\rm mod}~8)$,
Alahmadi et al. \cite{AGOSS} proved that
there are $q$-ary self-dual $2$-quasi negacyclic codes
$C_1,C_2,\dots$ with code length of $C_i$ being powers of $2$ and going to infinity,
and $\Delta(C_i)>\delta$, where $h_q(\delta)<\frac{1}{4}$.
A key technique in \cite{AGOSS} is that under their assumption on $q$
the $X^{2^m}+1$ is a product of two irreducible polynomials over~$F$
which are reciprocal each other.
About the existence, assuming that $n$ is odd and $\gcd(n,q)=1$,
they proved in \cite{AGOSS} that
the $q$-ary self-dual $2$-quasi negacyclic codes of length $2n$ exist if and only if
 $q\,{\not\equiv}-\!1~({\rm mod}~4)$.

We are concerned with the existence and the asymptotic property of
self-dual $2$-quasi negacyclic codes over $F$.
About the existence 
 we get the following result.

\begin{theorem} \label{int existence}
{\bf(1)} If $n$ is odd, then the $q$-ary self-dual $2$-quasi negacyclic codes
of length $2n$ exist if and only if  $q\,{\not\equiv}-\!1~({\rm mod}~4)$.

{\bf(2)} If $n$ is even, then the $q$-ary self-dual $2$-quasi negacyclic codes
of length~$2n$ always exist.
\end{theorem}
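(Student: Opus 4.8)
The plan is to recast the problem in module-theoretic terms and to reduce self-duality to a local condition at each irreducible factor of $X^n+1$. A $2$-quasi negacyclic code of length $2n$ is precisely an $R$-submodule $C\subseteq R^2$ with $R=F[X]/(X^n+1)$, the negacyclic shift being multiplication by $X$ (so $X^{-1}=-X^{n-1}$ in $R$). First I would introduce the $F$-algebra involution of $R$ sending $X$ to $X^{-1}$, written $r\mapsto\overline r$, and verify the identity $\langle{\bf a},{\bf b}\rangle=\big(a_1\overline{b_1}+a_2\overline{b_2}\big)_0$, where $(\cdot)_0$ denotes the constant coefficient and ${\bf a}=(a_1,a_2),{\bf b}=(b_1,b_2)\in R^2$. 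Since $C$ is shift-invariant, ${\bf b}\perp C$ in $F^{2n}$ if and only if $a_1\overline{b_1}+a_2\overline{b_2}=0$ in $R$ for all ${\bf a}\in C$; hence the Euclidean dual equals the orthogonal for the nondegenerate Hermitian form $h({\bf a},{\bf b})=a_1\overline{b_1}+a_2\overline{b_2}$ on $R^2$, and $C$ is self-dual exactly when it is a \emph{Lagrangian}, i.e.\ $C=C^{\perp_h}$ (equivalently $h(C,C)=0$ and $|C|=|R|$). Finally, when $q$ is even the negacyclic and cyclic structures coincide, and both claims follow from the results on self-dual $2$-quasi cyclic codes recalled in the Introduction; so I may assume $q$ odd.

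Next I would decompose the problem by the Chinese Remainder Theorem. Writing $n=p^am$ with $p=\mathrm{char}\,F$ odd and $p\nmid m$, one has $X^n+1=(X^m+1)^{p^a}$, so the distinct monic irreducible factors of $X^n+1$ are those of the separable polynomial $X^m+1$, each of multiplicity $p^a$. The involution permutes these factors by reciprocation $g\mapsto g^\ast$, and $R$ splits accordingly, compatibly with $h$, into blocks of three kinds: reciprocal pairs $\{g,g^\ast\}$ with $g\ne g^\ast$; self-reciprocal factors $g=g^\ast$ whose residual involution is nontrivial; and self-reciprocal factors whose residual involution is trivial. A global Lagrangian exists if and only if each block admits one, so it remains to treat the three types.

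The first two types never obstruct. A reciprocal pair forms a hyperbolic block in which one of the two paired local components (with the other set to zero) is a Lagrangian. For a self-reciprocal $g$ with nontrivial residual involution, its root $\alpha$ satisfies $\alpha\ne\alpha^{-1}$, and the block carries a unimodular Hermitian form over the local ring $F[X]/(g^{p^a})$ whose residue field $F[X]/(g)$ carries the standard two-dimensional Hermitian form; the latter has Witt index $1$, and since the residual involution is nontrivial, the resulting isotropic line lifts to a free totally isotropic rank-one summand, giving a Lagrangian. Thus the only possible obstruction lies at self-reciprocal factors with trivial residual involution. Moreover, sufficiency at such a factor is easy when $-1$ is a square in $F$: if $c\in F$ with $c^2=-1$, then $\overline c=c$ (as $F$ is fixed by the involution), so $h\big((1,c),(1,c)\big)=1+c\overline c=1+c^2=0$, and the free module generated by $(1,c)$ is a Lagrangian in that block.

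It remains to locate the obstructive block and to settle necessity. The residual involution on $F[X]/(g)$ is trivial precisely when its root satisfies $\alpha=\alpha^{-1}$, i.e.\ $\alpha=\pm1$; as $p$ is odd, $\alpha=1$ is never a root of $X^m+1$, while $\alpha=-1$ is a root if and only if $m$, equivalently $n$, is odd. Hence the unique self-reciprocal factor with trivial residual involution is $g=X+1$, present exactly when $n$ is odd, and its residual form is the symmetric form $u_1v_1+u_2v_2$ on $F^2$, which carries an isotropic line if and only if $-1$ is a square in $F$, i.e.\ $q\not\equiv-1\ (\mathrm{mod}\ 4)$. This already yields part~(2): for $n$ even the factor $X+1$ is absent, every block admits a Lagrangian, and a self-dual code always exists. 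For part~(1) with $n$ odd, combining the blocks shows existence when $q\not\equiv-1\ (\mathrm{mod}\ 4)$; for the converse, in the semisimple case $p\nmid n$ the $X+1$-block is the field $F$ with the anisotropic form $u_1v_1+u_2v_2$, which has no isotropic line, so no global Lagrangian exists. I expect the genuine difficulty to be the necessity in the non-semisimple case $p\mid n$: there the $X+1$-block is the chain ring $F[X]/((X+1)^{p^a})$ (write $\mathcal O$ for it and $t=X+1$), and one must show that, when $-1$ is a nonsquare, no self-dual submodule exists even though the block is large. The plan there is to push the residual anisotropy through the radical filtration—showing a self-dual $C$ would force $t^{p^a-1}\mathcal O^2\subseteq C\subseteq t\mathcal O^2$ and deriving a contradiction from the form induced on the successive quotients—which is the step I anticipate requiring the most care.
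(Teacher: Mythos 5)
Your block-by-block Hermitian analysis is essentially sound and is a genuinely different route from the paper's. The paper proves part (1) by transport of structure: for odd $n$ the map $\eta:a(X)\mapsto a(-X)$ is a weight- and inner-product-preserving isomorphism from $F[X]/\langle X^n-1\rangle$ onto $\CR$ (Remark~\ref{rk n is odd}), so the negacyclic question becomes the $2$-quasi cyclic one, which is settled by citing \cite{LS01} (semisimple case) and \cite{LS03} (general case). For part (2) the paper writes $n=2^mn'$ with $n'$ odd, embeds the ``$*$''-stable subalgebra $\CA=F[y]\cong F[X]/\langle X^{2^m}+1\rangle$ (with $y=x^{n'}$) into $\CR$, and counts the solutions of $gg^*=-1$ in $\CA$ (Lemma~\ref{lem |D(A,*)|=}), the count being positive for every odd $q$; your rank-one isotropy condition $h\big((1,g),(1,g)\big)=1+g\bar g=0$ is exactly the paper's Lemma~\ref{lem * and self-dual}. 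Your CRT classification of blocks is correct: reciprocal pairs are hyperbolic (one paired component is a Lagrangian), self-reciprocal factors with nontrivial residual involution carry a $2$-dimensional Hermitian residual form of Witt index $1$, and the unique trivial-residual-involution factor is $X+1$, present iff $n$ is odd, where isotropy of $u_1v_1+u_2v_2$ recovers $q\,{\not\equiv}-1\pmod 4$. Your route is more self-contained (no appeal to Ling--Sol\'e) and treats both parts uniformly; what it does not give is the counting formula for $\CD(\CA,*)$, which the paper reuses crucially in its asymptotic section.

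Two steps are asserted rather than proved, and one of them is a genuine gap. (i) The lift of the residual isotropic line in a self-reciprocal block with nontrivial residual involution is true for the reason you name: nontriviality makes the residue extension separable of degree $2$, so the trace is surjective on each radical layer and a successive-approximation argument lifts $c_0\bar c_0=-1$ to $c\bar c=-1$ in $F[X]/(g^{p^a})$; this needs writing but is routine. (ii) The necessity in part (1) when $p\mid n$ is left as a plan, and this is precisely the case the paper never confronts, since the $\eta$-reduction plus \cite[Theorem 6.1]{LS03} covers the non-semisimple situation for free. Your sketched plan does close, though: with $t=x+1$ one has $\bar t=x^{-1}t$ and $x^{-1}\equiv -1\pmod t$, so the form induced on each layer $t^k\mathcal{O}^2/t^{k+1}\mathcal{O}^2$ is $(-1)^k(u_1v_1+u_2v_2)$, again anisotropic; hence $C\subseteq t^k\mathcal{O}^2$ forces $C\subseteq t^{k+1}\mathcal{O}^2$ whenever $2k<p^a$, and since $p^a$ is odd the iteration reaches $C\subseteq t^{(p^a+1)/2}\mathcal{O}^2$, while self-duality gives $C=C^{\perp_h}\supseteq\big(t^{(p^a+1)/2}\mathcal{O}^2\big)^{\perp_h}=t^{(p^a-1)/2}\mathcal{O}^2$, a contradiction. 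You should also record the routine but load-bearing fact that a global Lagrangian decomposes into block Lagrangians (via the ``$*$''-grouped CRT idempotents), since both your sufficiency and necessity arguments rest on it.
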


And for any $q$ we obtain that
the $q$-ary self-dual $2$-quasi negacyclic codes are asymptotically good.
Precisely, we have the following theorem.

\begin{theorem} \label{int asymptotic}
Let $\delta\in(0,1-q^{-1})$ and $h_q(\delta)<\frac{1}{4}$.
Then there are $q$-ary self-dual $2$-quasi negacyclic codes
$C_1 ,C_2, \dots$ (hence the rate ${\rm R}(C_i)=\frac{1}{2}$) such that
the code length $2n_i$ of $C_i$ goes to infinity
and the relative minimum weight $\Delta(C_i)>\delta$ for $i=1,2,\dots$.
\end{theorem}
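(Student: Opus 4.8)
The plan is to model $2$-quasi negacyclic codes algebraically and then produce the family by an averaging argument. Put $R=F[X]/(X^n+1)$, so that multiplication by $X$ is the negacyclic shift and $F^n\times F^n$ becomes the free module $R^2$, with $2$-quasi negacyclic codes being exactly the $R$-submodules. I would fix the conjugation $\overline{\phantom{a}}\colon R\to R$ determined by $\overline X=X^{-1}=-X^{n-1}$, and record the identification $\langle\mathbf a,\mathbf b\rangle=$ (constant term of $a(X)\,\overline{b}(X)$); extending it coordinatewise, a short computation gives, for $v\in R$, that $C_v=R\cdot(1,v)=\{(r,rv):r\in R\}$ satisfies $\langle(r,rv),(s,sv)\rangle=\operatorname{const}\!\big(r\bar s\,(1+v\bar v)\big)$, so $C_v$ is self-orthogonal iff $v\bar v=-1$. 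Since the first coordinate is a unit, $\dim_F C_v=\dim_F R=n$, whence such a $C_v$ is automatically self-dual. The whole problem thus reduces to the set $\mathcal F=\{v\in R:\,v\bar v=-1\}$ together with a weight estimate for the best member $C_v$.

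Next I would determine $\mathcal F$ through the CRT decomposition of $R$ along the $F$-irreducible factors of $X^n+1$, which the conjugation permutes into conjugate pairs and self-conjugate components. The point at which I restrict to \emph{even} $n$ is here: for even $n$ one has $(-1)^n+1=2\ne 0$, so the factor $X+1$ is absent, and (in the semisimple case $\gcd(n,q)=1$) no component carries the trivial involution; every self-conjugate component is a field extension with a nontrivial order-two automorphism, on which $v\mapsto v\bar v$ is the surjective field norm, while on each conjugate pair $v\bar v=-1$ leaves one coordinate free. Consequently $\mathcal F\ne\varnothing$ for \emph{every} $q$, and matching the free part against $\dim_F R$ gives $|\mathcal F|=q^{\,n/2\,(1+o(1))}$ as $n\to\infty$ through even integers prime to $q$. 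This is exactly where the present technique bypasses the congruence hypotheses of \cite{SQS,AGOSS}: surjectivity of these norms is unconditional, so Artin's conjecture is not needed; the nonemptiness is also guaranteed by Theorem~\ref{int existence}(2).

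With the family fixed I would run a Gilbert--Varshamov--type average. Writing $N_v$ for the number of nonzero codewords of $C_v$ of weight at most $2n\delta$, double counting gives
\begin{align*}
\sum_{v\in\mathcal F}N_v=\sum_{0\ne r\in R}\ \big|\{v\in\mathcal F:\ {\rm w}(r)+{\rm w}(rv)\le 2n\delta\}\big|.
\end{align*}
The dominant contribution is from words $(r,rv)$ whose first block $r$ is a unit: then $v=r^{-1}(rv)$ is uniquely determined, so each weight-$\le 2n\delta$ word lies in at most one $C_v$, and this part of the sum is bounded by the Hamming-ball volume $q^{\,2n\,h_q(\delta)(1+o(1))}$. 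Dividing by $|\mathcal F|=q^{\,n/2\,(1+o(1))}$ yields an average of at most $q^{\,2n(h_q(\delta)-\frac14)(1+o(1))}$, which tends to $0$ precisely when $h_q(\delta)<\frac14$; hence some $C_v$ has ${\rm w}(C_v)>2n\delta$, i.e.\ $\Delta(C_v)>\delta$, while ${\rm R}(C_v)=\frac12$ by self-duality. Letting $n$ run through even integers prime to $q$ then furnishes the sequence $C_1,C_2,\dots$. The halving of the exponent from $\frac12$ to $\frac14$ is the arithmetic shadow of self-duality, which shrinks the pool of admissible codes from $\approx q^{n}$ to $\approx q^{n/2}$.

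The step I expect to be the main obstacle is the complementary one: words $(r,rv)$ whose first block $r$ is a zero-divisor, i.e.\ vanishes on some set $T$ of CRT components. Such a word imposes no condition on $v$ over $T$, so it can lie in as many as $q^{\frac12\dim_F(T\text{-part})}$ of the codes, and the naive per-stratum estimate $q^{-\frac12\dim_{T^{c}}}\cdot|\{\text{weight}\le 2n\delta\}\cap V_{T^{c}}|$ is \emph{not} uniformly small for components $T^{c}$ of moderate dimension, because confining to the subspace $V_{T^{c}}$ does not by itself cut down low-weight words. This is the genuine difficulty that forces a new idea rather than the few-factor shortcut of \cite{AGOSS} (which is unavailable for all $q$ without an Artin-type input). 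The resolution I would pursue is to retain, rather than discard, the norm constraint defining $\mathcal F$: for fixed unit $r$ the admissible second blocks form the coset $\{\mathbf b:\mathbf b\bar{\mathbf b}=-r\bar r\}$, whose intersection with the Hamming ball is substantially smaller than the ball itself, the saving growing with the number of self-conjugate components; combining this coset-versus-ball estimate with a summation over the (polynomially many) strata $T$ should render every $T\ne\varnothing$ subdominant, leaving the single stratum $T=\varnothing$ to govern the sum and preserve the threshold $h_q(\delta)<\frac14$ for all $q$.
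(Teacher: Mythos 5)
Your setup (self-duality of $C_{1,v}$ iff $v\bar v=-1$, counting $\mathcal{F}$, and a GV-type average against ball volumes) coincides with the paper's framework, and your unit-stratum computation matches the paper's $s=\hat t$ case. But the proof as written has a genuine gap, and you have located it yourself: the zero-divisor strata. For your proposed family --- all even $n$ with $\gcd(n,q)=1$ --- the obstacle is real, not just technical: the number $t$ of CRT components of $F[X]/\langle X^n+1\rangle$ generally grows with $n$ (e.g.\ $n$ with many distinct odd prime factors), so there exist strata $T$ whose complement has dimension of intermediate size; for these, the count of codes through a fixed word, $q^{\frac12\dim(T\text{-part})}$, times the number of low-weight words supported on $T^c$ is not dominated by $|\mathcal{F}|\cdot q^{2n(h_q(\delta)-1/4)}$, and the union over the exponentially many (in $t$) strata compounds the loss. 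Your proposed fix --- a ``coset-versus-ball'' estimate exploiting the constraint $\mathbf b\bar{\mathbf b}=-r\bar r$ on the second block --- is never carried out, and it is not clear it can be: the needed saving must be uniform over all strata and all $q$, which is precisely the kind of estimate that has blocked progress on cyclic codes. So the argument does not close.

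The paper's missing idea is structural rather than a sharper ball bound: it restricts to $n=2^m$ with $m\ge 2\mu_q$, where $\mu_q$ is the $2$-adic invariant of Lemma~\ref{lem ord q}. Then ${\rm ord}_{\Z_{2^{m+1}}^\times}(q)=2^{m-\mu_q+1}$, so the algebra $\CA=F[X]/\langle X^{2^m}+1\rangle$ has a \emph{constant} number $\hat t=2^{\mu_q-2}$ of minimal ``$*$''-stable ideals (Remark~\ref{rk r hat t ...}, Eq.\eqref{eq hat t=...}), each of dimension $2^{r+1}\to\infty$. This kills your obstacle at the root: there are no components of moderate dimension, every nonempty stratum $A\in\Omega_s$ contributes at most $q^{s\cdot 2^{r+2}h_q(\delta)}\cdot q^{2^r(\hat t-s)}=q^{\hat t 2^r+4s2^r(h_q(\delta)-\frac14)}$ by the balanced-code bound (Lemma~\ref{lem balance}) together with $|\CD(\CA,*)_{(a,b)}|\le q^{2^r(\hat t-s)}$ (Lemma~\ref{lem supp of a}), which is uniformly exponentially small relative to $|\CD(\CA,*)|\ge q^{2^{m-1}-2}$ (Lemma~\ref{D(A,*) >=}) for every $s\ge 1$; and the union over strata costs only the constant factor $2^{\hat t}$ (Lemma~\ref{lem lem D <= delta}), giving the ratio $2^{\hat t}q^2q^{4\cdot 2^r(h_q(\delta)-\frac14)}\to 0$. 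In short: your averaging skeleton is right, but without the restriction to lengths $2^m$ (or some substitute controlling the number of components unconditionally in $q$), the zero-divisor strata remain unbounded, and that is exactly the step your proposal leaves open.
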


The methodological innovation of the article are as follows.
Firstly, instead of the reciprocal polynomials, we use
an operator ``$*$'', which is an algebra automorphism of order $2$
on the quotient algebra $F[X]\big/\langle X^n+1\rangle$
(see Lemma~\ref{lem order of *} below).
Secondly, by analysing the $q$-cosets, for any odd $q$
we get the decomposition of
the quotient algebra $F[X]\big/\langle X^{2^m}+1\rangle$ with $m\geq 1$
(see Remark~\ref{rk r hat t ...} below).
Then, with the help of an estimation formula
on the so-called {\em balanced codes} (see Eq.\eqref{eq balanced <=} below),
we can estimate the number of the self-dual $2$-quasi negacyclic codes
whose relative minimum weights are small (see Lemma \ref{lem lem D <= delta}).

In Section~\ref {preliminaries}, we sketch some
algebraic and number-theoretic preparations.

In Section~\ref{operator on CR},
we introduce the operator ``$*$''
on the algebra $F[X]\big/\langle X^n+1\rangle$.

In Section~\ref {existence}, after some preparations,
we study the existence of self-dual $2$-quasi negacyclic codes over $F$.
The proof of the above Theorem~\ref{int existence}
divides into two  cases,
see Theorem~\ref{th existence odd} and Theorem~\ref{thm existence even}.

Section~\ref {asymptotic} is devoted to a proof of the above
Theorem~\ref{int asymptotic},
 it is relabelled as Theorem~\ref{thm asymptotic}.

Finally, conclusion is made in Section~\ref{Conclusions}.

\section{Preliminaries}\label{preliminaries}

Let $F$ be always the finite field of cardinality $|F|=q$,
where $q$ is a prime power.
In this article, any ring $R$ has identity $1_R$ (or $1$ for short), and
``subring'' and ``ring homomorphism'' are identity preserving.
By $R^\times$ we denote the multiplication group consisting of
all units (invertible elements) of the ring $R$. In particular,
$F^\times=F\setminus\{0\}$.
If a ring $R$ is also an $F$-vector space, then $R$ is called an $F$-algebra.
An algebra homomorphism between two algebras is both a ring homomorphism
and a linear map.
For any $F$-algebra $R$, there is a canonical injective homomorphism
$F \to R$, $\alpha\mapsto \alpha 1_R$, to embed $F$ into $R$,
so that we can write $F\subseteq R$ and say that $F$ is a subalgebra of $R$
(note that with the embedding $1_F=1_R$).

Let $n>1$ be an integer. A linear code $C$ of $F^n$ is called a {negacyclic code} if
\begin{align}\label{eq def negacyclic}
  (c_0,c_1,\dots,c_{n-1})\in C~ \implies~
  (-c_{n-1} ,c_0,\dots,c_{n-2})\in C;
\end{align}
and a linear code $C$ of $F^n\times F^n$
is called a {\em $2$-quasi negacyclic code}
if
\begin{align}\label{eq def 2-quasi negacylic}
\begin{array}{ccc}
& (c_0,c_1,\dots,c_{n-1},\; c_0',c_1',\dots,c_{n-1}')\in C \\
&\implies  (-c_{n-1},c_0,\dots,c_{n-2}, \; -c_{n-1}',c_0',\dots,c_{n-2}')\in C.
\end{array}
\end{align}

In the following we always denote
$\CR=F[X]/\langle X^n+1\rangle$ with $n>1$,
which is the quotient algebra of the polynomial algebra $F[X]$ over the ideal
$\langle X^n+1\rangle$ generated by $X^n+1$.
Any residue class modulo $X^n+1$ has a unique representative
polynomial with degree less than $n$. Hence we can write 
\begin{align} \label{eq R_lambda}
\CR=F[X]\big/\langle X^n +1\rangle
 =\big\{\,  a_0+a_1X+\dots+a_{n-1}X^{n-1}\;\big|\; a_i\in F\,\big\}.
\end{align}
Further, the Cartesian product
\begin{align}
 \CR^2=\CR\times\CR
 =\big\{\big(a(X),a'(X)\big)\;\big|\; a(X),~ a'(X)\in\CR \big\}
\end{align}
is an $\CR$-module.

\begin{remark}\label{rk identification} \rm
{\rm (1)}~There is a canonical linear ismorphism
\begin{align} \label{eq CR to F}
 {\CR}~
\mathop{\longrightarrow}^{\cong}~ F^n,\qquad
a(X)=\sum\limits_{i=0}^{n-1}a_{i}X^{i} ~\longmapsto~ \mathbf{a}=(a_0,a_1,\dots,a_{n-1}).
\end{align}
So any element $a(X)=a_0+a_1X+\dots+a_{n-1}X^{n-1}\in\CR$
is identified with a word
${\bf a}=(a_0,a_1,\dots,a_{n-1})\in F^n$.
Note that, for $a(X)=a_0+a_1X+\dots+a_{n-1}X^{n-1}\in\CR$, the element
$$
 Xa(X)=-a_{n-1}+a_0X+\dots+a_{n-2}X^{n-1}~({\rm mod}~X^n+1)
$$
corresponds to the word $(-a_{n-1},a_0,a_1,\dots,a_{n-2})\in F^n$.
By Eq.(\ref{eq def negacyclic}),
the ideals of $\CR$ ($\CR$-submodules of the regular module $\CR$)
 are identified with the negacyclic codes over $F$ of length $n$.

{\rm (2)}~Extending the linear isomorphism in Eq.(\ref{eq CR to F}) to
 the following linear isomorphism:
\begin{align} \label{eq CR CR to F F}
\CR\!\times\!\CR \mathop{\longrightarrow}^{\cong}~  F^n\!\times\! F^n, ~~
(a(X),a'(X))\mapsto (a_0,a_1,\dots, a_{n-1},\,a'_0,a'_1,\dots, a'_{n-1}),
\end{align}
where $a(X)=\sum\limits_{i=0}^{n-1}a_{i}X^{i}$
and $a'(X)=\sum\limits_{i=0}^{n-1}a'_{i}X^{i}$ in $\CR$.
With the isomorphism Eq.(\ref{eq CR CR to F F}),
it is easy to see that $(Xa(X),\,Xa'(X))$ corresponds to the word
$(-a_{n-1},a_0,a_1, \dots,a_{n-2},\,-a'_{n-1},a'_0, a'_1,\dots,a'_{n-2})\in F^n\times F^n$.
By Eq.(\ref{eq def 2-quasi negacylic}),
the $\CR$-submodules of $\CR^2=\CR\times\CR$
are identified with the $2$-quasi negacyclic codes over $F$ of length $2n$.
\end{remark}

Next we introduce more algebraic properties of
$\CR$ in the semisimple case,
i.e., $\gcd(2n,q)=1$; in this case $q$ is odd.
Let $\Z_{2n}$ denote the residual integer ring modulo $2n$;
so $q\in\Z_{2n}^\times$, where $\Z_{2n}^\times$ is the multiplication group
consisting of all the units (invertible element) of $\Z_{2n}$.
By ${\rm ord}_{\Z_{2n}^\times}(q)$ we denote the order of $q$
in the multiplication group $\Z_{2n}^\times$, and
by $\langle q\rangle_{\Z_{2n}^\times}$ we denote the
subgroup of the multiplication group $\Z_{2n}^\times$ generated by~$q$.
The group $\langle q\rangle_{\Z_{2n}^\times}$ acts on the set $\Z_{2n}$ by multiplication,
any orbit of the action is called a {\em $q$-orbit}, or a {\em $q$-coset}.
Let
$$
 1+2\Z_{n}=\{1+2k\,|\,k\in\Z_n\}=\{1,3,5,\dots, 2n-1\}\,\subseteq\,\Z_{2n}.
$$
Since $\gcd(2n,q)=1$ and $2\,|(q-1)$,
it follows that $q\in{\Z_{2n}^\times\cap (1+2\Z_{n})}$.
Thus the group $\langle q\rangle_{\Z_{2n}^\times}$ acts on the set $1+2\Z_n$,
 i.e.,
$q (1+2k)\in 1+2\Z_n$ for any $1+2k\in 1+2\Z_n$.
Denote the set of all $q$-orbits within $1+2\Z_{n}$ as follows
\begin{align} \label{eq q-orbits}
 (1+2\Z_n)\big/\langle q\rangle_{\Z_{2n}^\times} =\{Q_1,\;\dots,\; Q_t\}.
\end{align}
Set ${\cal T}=\{1,\dots,t\}$ and ${\cal T}_{\{i\}}= {\cal T}\setminus\{i\}$.
The decomposition of $\CR=F[X]\big/\langle X^n+1\rangle$ is related with the
 $q$-cosets in Eq.\eqref{eq q-orbits}.
The (1)---(2) of the following remark can be found
 in \cite{CDFL}.

\begin{remark} \label{rk q-coset} \rm
Assume that $\gcd(2n,q)=1$.
For an extension of $F$, there is a primitive $2n$-th root of unity
 $\xi$ such that $\xi^n=-1$.
It is easy to see that $\xi^i$, $i\in1+2\Z_{n}$, are all the roots of
the polynomial $X^n+1$, i.e.,
$X^n+1=\prod_{i\in(1+2\Z_n)}(X-\xi^i)$ (over $F(\xi)$, an extension of $F$).
Then:

{\bf(1)}~ For each $q$-orbit $Q_i$ in Eq.\eqref{eq q-orbits},
the polynomial $\psi_i(X)=\prod_{j\in Q_i}(X-\xi^j)$ is irreducible over~$F$.
And the following is an irreducible decomposition of $X^n+1$ over $F$:
\begin{align}
X^n+1=\psi_1(X)\cdot\dots\cdot \psi_t(X)=\prod_{i\in{{\cal T}}}\psi_i(X).
\end{align}
For $j_0\in Q_i$, we have an isomorphism:
$$
  F[X]\big/\langle \psi_i(X)\rangle
  \mathop{\longrightarrow}\limits^{\cong}F(\xi^{j_0}), ~~
   f(X)\longmapsto f(\xi^{j_0}),
$$
where $F(\xi^{j_0})$ is a field extension over~$F$ generated by $\xi^{j_0}$
with extension degree $|F(\xi^{j_0}):F|=\deg(\psi_i(X))$, i.e.,
$F[X]\big/\langle \psi_i(X)\rangle$ is the field with cardinality $q^{\deg(\psi_i(X))}$.

{\bf(2)}~ Let $\hat\psi_i(X)=(X^n+1)\big/\psi_i(X)
  =\prod_{k\in{{\cal T}_{\{i\}}}}\psi_k(X)$,
$i=1,\dots,t$. 
Then each $\CR\hat\psi_i(X)=\CR e_i$ is a minimal ideal of
$\CR$, where $e_i$ denotes the primitive idempotent of the minimal ideal.
Note that $e_1, e_2, \dots, e_t$ are all primitive idempotents of $\CR$.
And the following hold:
\begin{align} \label{eq R= Re_1+...}
\begin{array}{l}
 \CR=\CR\hat\psi_1(X)\oplus\dots\oplus \CR\hat\psi_t(X)=
\CR e_1\oplus\dots\oplus \CR e_t ;\\
 1_{\CR}=e_1+\dots+e_t, ~~~~
  e_i e_j=\begin{cases} e_i, & i=j; \\ 0, & i\neq j; \end{cases} \\
\CR e_i=\CR\hat\psi_i(X) \cong F[X]\big/\langle\psi_i(X) \rangle, \\
 \dim_F\CR e_i =\deg \psi_i(X) =|Q_i|,   ~~~ i=1,\dots,t,
\end{array}
\end{align}
where $\CR e_i$ is a extension field over $F$
with degree $|\CR e_i: F|=\dim_F\CR e_i$.

{\bf(3)}~Assume that $n=2^\ell$ for an integer $\ell \geq 1$.
It is obvious that
$$
 1+2\Z_{2^{\ell}}=\{1,\,3,\,\dots,\, 2^{\ell+1}-1 \}
 =\Z_{2^{\ell+1}}^\times
$$
is exactly the multiplication group of the units of $\Z_{2^{\ell+1}}$,
the order $|\Z_{2^{\ell+1}}^\times|=2^\ell$.
Note that $\langle q\rangle_{\Z_{2^{\ell+1}}^\times}$ is a subgroup of
$\Z_{2^{\ell+1}}^\times$, and the $q$-orbits are
just the cosets of the subgroup
$\langle q\rangle_{\Z_{2^{\ell+1}}^\times}$ in $\Z_{2^{\ell+1}}^\times$.
In this case, for $ i=1,\dots,t$
\begin{align} \label{eq |Q_i|=}
 |Q_i|={\rm ord}_{\Z_{2^{\ell+1}}^\times}(q), ~~{\rm and}~~
 t=|\Z_{2^{\ell+1}}^\times|\big/|Q_i|,
\end{align}
where $t$ is the number of the $q$-orbits as in Eq.\eqref{eq q-orbits} (with $n=2^\ell$).
\end{remark}
Note that $Z_{2^{2}}^\times=\{1,\,3\}=\{\pm 1\}$,
it is easy to check that ${\rm ord}_{\Z_{2^{2}}^\times}(q)=1$ or $2$.
Next, we consider ${\rm ord}_{\Z_{2^{\ell}}^\times}(q)$ in the case that $\ell>2$.
The group structure of $\Z_{2^{\ell}}^\times$
 is well-known, e.g., see~\cite[Theorem 13.19]{H}. We state it as a lemma.

\begin{lemma}\label{lem -1 . 5}
Let $\ell >2$ be an integer. Then
the multiplication group
$\Z_{2^{\ell}}^\times$ is a direct product of two cyclic subgroups:
\begin{align} \label{eq -1 . 5}
 \Z_{2^{\ell}}^\times=\langle -1\rangle_{\Z_{2^{\ell}}^\times}
  \times \langle 5\rangle_{\Z_{2^{\ell}}^\times},
\end{align}
and ${\rm ord}_{\Z_{2^{\ell}}^\times}(5)=2^{\ell -2}$.
\qed
\end{lemma}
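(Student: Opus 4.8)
The plan is to verify the two assertions separately and then stitch them together by a counting argument. Since $\Z_{2^\ell}^\times$ consists of all odd residues modulo $2^\ell$, its order is $|\Z_{2^\ell}^\times|=2^{\ell-1}$; in particular every element of this group has order a power of $2$. Pinning down the order of $5$ is the technical core, so I would do that first, then show the two cyclic subgroups intersect trivially, and finally conclude.

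For the order of $5$, the key is the congruence
\begin{align*}
 5^{2^{j-1}}\equiv 1+2^{j+1}\pmod{2^{j+2}}, \qquad j\geq 1,
\end{align*}
which I would establish by induction on $j$. The base case $j=1$ is $5\equiv 1+4\pmod 8$. For the inductive step, writing $5^{2^{j-1}}=1+2^{j+1}(1+2m)$ and squaring gives $5^{2^{j}}=1+2^{j+2}(1+2m)+2^{2j+2}(1+2m)^2$; since $2j+2\geq j+3$ for $j\geq 1$, both the $2m$-contribution and the last term vanish modulo $2^{j+3}$, leaving $5^{2^{j}}\equiv 1+2^{j+2}\pmod{2^{j+3}}$, which is the claim with $j$ replaced by $j+1$. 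Taking $j=\ell-2$ (legitimate since $\ell>2$) yields $5^{2^{\ell-3}}\equiv 1+2^{\ell-1}\not\equiv 1\pmod{2^{\ell}}$, while taking $j=\ell-1$ and reducing modulo $2^{\ell}$ yields $5^{2^{\ell-2}}\equiv 1\pmod{2^{\ell}}$. As ${\rm ord}_{\Z_{2^\ell}^\times}(5)$ is a power of $2$ dividing $2^{\ell-1}$, these two facts force it to equal $2^{\ell-2}$.

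Next I would show $\langle -1\rangle_{\Z_{2^\ell}^\times}\cap\langle 5\rangle_{\Z_{2^\ell}^\times}=\{1\}$. Every power of $5$ is $\equiv 1\pmod 4$ because $5\equiv 1\pmod 4$, whereas $-1\equiv 3\pmod 4$ (this is where $\ell>2$ is used). Hence $-1\notin\langle 5\rangle_{\Z_{2^\ell}^\times}$, and since $\langle -1\rangle_{\Z_{2^\ell}^\times}=\{1,-1\}$ the intersection is trivial. Finally, because $\Z_{2^\ell}^\times$ is abelian the product $\langle -1\rangle_{\Z_{2^\ell}^\times}\langle 5\rangle_{\Z_{2^\ell}^\times}$ is a subgroup, and the trivial intersection gives it order $2\cdot 2^{\ell-2}=2^{\ell-1}=|\Z_{2^\ell}^\times|$. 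Thus the product exhausts the whole group and the sum is internal direct, which is precisely Eq.\eqref{eq -1 . 5}. I expect the inductive congruence to be the only genuine obstacle; the intersection and counting steps are routine bookkeeping with orders.
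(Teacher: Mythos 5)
Your proof is correct, but note that the paper does not prove this lemma at all: it is stated with a \qed and attributed to the literature (``The group structure of $\Z_{2^{\ell}}^\times$ is well-known, e.g., see \cite[Theorem 13.19]{H}''), since it is the classical description of the unit group of $\Z/2^\ell\Z$. What you have supplied is the standard self-contained proof of that classical fact, and every step checks out: the inductive congruence $5^{2^{j-1}}\equiv 1+2^{j+1}\pmod{2^{j+2}}$ is established correctly (the squaring step needs $2j+2\ge j+3$, i.e.\ $j\ge 1$, which you verify), the two specializations $j=\ell-2$ and $j=\ell-1$ pin ${\rm ord}_{\Z_{2^{\ell}}^\times}(5)$ down to exactly $2^{\ell-2}$, the mod-$4$ argument correctly shows $\langle -1\rangle_{\Z_{2^{\ell}}^\times}\cap\langle 5\rangle_{\Z_{2^{\ell}}^\times}=\{1\}$ (and this is indeed where $\ell>2$ enters, together with the requirement $j=\ell-2\ge 1$ in the induction), and the order count $2\cdot 2^{\ell-2}=2^{\ell-1}=|\Z_{2^{\ell}}^\times|$ legitimately upgrades the product to an internal direct product since the group is abelian. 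So your argument buys self-containedness at the cost of about a page of elementary number theory, whereas the paper simply delegates to a textbook; for a research article the citation is the expected choice, but your write-up would be a perfectly valid replacement and is the proof one would find behind the cited theorem.
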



\begin{lemma}\label{lem ord q}
Assume that $q$ is odd. Then there is an integer $\mu_q\ge 2$
such that for any integer $\ell> \mu_q$ the order
${\rm ord}_{\Z_{2^{\ell}}^\times}(q)=2^{\ell-\mu_q}$.
\end{lemma}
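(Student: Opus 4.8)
The plan is to reduce the statement to a computation of the $2$-adic valuation of $q^{2^k}-1$. Throughout write $v_2(m)$ for the exponent of $2$ in the prime factorization of a nonzero integer $m$. Since $\ell>\mu_q\ge 2$ will force $\ell\ge 3$, Lemma~\ref{lem -1 . 5} tells us that $\Z_{2^\ell}^\times$ is a $2$-group; consequently ${\rm ord}_{\Z_{2^\ell}^\times}(q)$ is a power of $2$, say $2^k$, and it is the least such power for which $q^{2^k}\equiv 1~({\rm mod}~2^\ell)$, i.e. the least $2^k$ with $v_2(q^{2^k}-1)\ge\ell$. Thus everything hinges on understanding how $v_2(q^{2^k}-1)$ grows with $k$.

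First I would set $\mu_q=v_2(q^2-1)-1$ and check $\mu_q\ge 2$: as $q$ is odd, $q^2\equiv 1~({\rm mod}~8)$, so $v_2(q^2-1)\ge 3$ and hence $\mu_q\ge 2$. Next, using the telescoping factorization
\begin{align*}
 q^{2^k}-1=(q-1)\prod_{j=0}^{k-1}\big(q^{2^j}+1\big),
\end{align*}
I would evaluate $v_2(q^{2^k}-1)$ for $k\ge 1$. The point is that for each $j\ge 1$ the integer $q^{2^j}$ is an odd square, hence $\equiv 1~({\rm mod}~8)$, so $q^{2^j}+1\equiv 2~({\rm mod}~8)$ and $v_2(q^{2^j}+1)=1$; while the first two factors contribute $v_2(q-1)+v_2(q+1)=v_2(q^2-1)=\mu_q+1$. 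Summing gives $v_2(q^{2^k}-1)=(\mu_q+1)+(k-1)=\mu_q+k$ for every $k\ge 1$.

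With this formula the conclusion is immediate. Fix $\ell>\mu_q$. Because exactly one of $q-1,q+1$ has $2$-adic valuation $1$ and their valuations sum to $\mu_q+1$, we get $v_2(q-1)\le\mu_q<\ell$, so $q\not\equiv 1~({\rm mod}~2^\ell)$ and the order is not $1$, forcing $k\ge 1$. On the other hand, for $k\ge 1$ the inequality $v_2(q^{2^k}-1)=\mu_q+k\ge\ell$ holds exactly when $k\ge\ell-\mu_q$, so the least admissible exponent is $k=\ell-\mu_q\ge 1$. Therefore ${\rm ord}_{\Z_{2^\ell}^\times}(q)=2^{\ell-\mu_q}$, as claimed.

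The only real work is the valuation count in the middle step, the lifting-the-exponent style evaluation of $v_2(q^{2^k}-1)$; the rest is bookkeeping. A subtlety worth flagging is that the clean growth law $v_2(q^{2^k}-1)=\mu_q+k$ only begins at $k\ge 1$: the $k=0$ value $v_2(q-1)$ can be strictly smaller than $\mu_q$ when $q\equiv 3~({\rm mod}~4)$. This is exactly why one must separately rule out order $1$, and why the formula is asserted only for $\ell>\mu_q$ rather than for all $\ell\ge\mu_q$.
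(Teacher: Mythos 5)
Your proof is correct, and it takes a genuinely different route from the paper. The paper defines $\mu_q$ implicitly, as the largest $\ell$ with $q\equiv\pm1\pmod{2^\ell}$ (Eq.\eqref{eq m<mu_q}), and then pins down the order by locating $q$ inside the decomposition $\Z_{2^\ell}^\times=\langle -1\rangle\times\langle 5\rangle$ of Lemma~\ref{lem -1 . 5}: writing $q=(-1)^j 5^{i'2^i}$, it compares the order of $5$ modulo $2^{\mu_q+1}$ and modulo $2^{\mu_q}$ to squeeze $r=\ell-\mu_q$. You instead work $2$-adically: with $v_2$ the $2$-adic valuation, the telescoping factorization $q^{2^k}-1=(q-1)\prod_{j=0}^{k-1}(q^{2^j}+1)$ together with $q^{2^j}+1\equiv 2\pmod 8$ for $j\ge1$ gives the lifting-the-exponent law $v_2(q^{2^k}-1)=v_2(q^2-1)+k-1$ for $k\ge1$, from which the order drops out by minimizing $k$. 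The two definitions of $\mu_q$ agree: since exactly one of $q\pm1$ has valuation $1$ and $v_2(q-1)+v_2(q+1)=v_2(q^2-1)$, the largest $\ell$ with $q\equiv\pm1\pmod{2^\ell}$ is $\max\bigl(v_2(q-1),v_2(q+1)\bigr)=v_2(q^2-1)-1$, which is your $\mu_q$ (and this also lets you recover Eq.\eqref{eq m<mu_q}, which the paper reuses later, e.g.\ in Remark~\ref{rk r hat t ...}(3)). What each approach buys: yours is more elementary and self-contained --- it needs the structure theorem only for the trivial fact that $\Z_{2^\ell}^\times$ is a $2$-group (which already follows from $|\Z_{2^\ell}^\times|=2^{\ell-1}$), and it yields an explicit computable formula $\mu_q=v_2(q^2-1)-1$ consistent with the paper's example table (e.g.\ $q=7$ gives $v_2(48)-1=3$, $q=31$ gives $v_2(960)-1=5$); the paper's argument, by contrast, exhibits exactly where $q$ sits relative to the generator $5$, which is the structural viewpoint it exploits elsewhere. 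You were also right to flag the $k=0$ anomaly when $q\equiv 3\pmod 4$: ruling out order $1$ via $v_2(q-1)\le\mu_q<\ell$ is precisely the step that the clean growth law does not cover, and your handling of it is correct.
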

\begin{proof}
It is clear that $q\,{\equiv} \pm1\pmod{2^2}$.
If $q\,{\equiv} \pm1\pmod{2^k}$ for a positive integer~$k$,
then $q\,{\equiv} \pm1\pmod{2^{\ell}}$ for any integer $\ell$ with $1<\ell\le k$.
Obviously, for large enough integer $\ell$ we have
$1< q<2^{\ell}-1$, hence $q\,{\not\equiv} \pm1\pmod{2^{\ell}}$.
Thus there is an integer $\mu_q\ge 2$
such that
\begin{align} \label{eq m<mu_q}
\begin{array}{ll}
 q\,{\not\equiv} \pm 1 \!\pmod{2^{\ell}}, ~~~ & \forall\, \ell > \mu_q;\\[3pt]
 q\,{\equiv} \pm 1 \!\pmod{2^{\ell}}, & \forall\, \ell \le \mu_q.
\end{array}
\end{align}
where ``$q\,{\not\equiv} \pm 1 \!\pmod{2^{\ell}}$'' means that
$q\notin\{ \pm 1\}$ in $\Z_{2^\ell}^\times$.

Assume that $\ell>\mu_q$. By Eq.\eqref{eq -1 . 5},
in $\Z_{2^\ell}^\times$ we can write
\begin{align}\label{eq q=-1.5}
 q=(-1)^j\cdot 5^{i'2^{i}},
~~ \mbox{where $1\le i'2^i\le 2^{\ell-2}$ with $i'$ being odd,  $j=0,1$}.
\end{align}
It is easy to check that
${\rm ord}_{\Z_{2^{\ell}}^\times}(q) =
{\rm lcm}({\rm ord}_{\Z_{2^{\ell}}^\times}((-1)^j),\,
  {\rm ord}_{\Z_{2^{\ell}}^\times}(5^{i'2^i}))$,
where ``\rm lcm'' means the least common multiple.
Obverse that
$q\,{\not\equiv} \pm 1 \!\pmod{2^{\ell}}$
since $\ell >\mu_q$ see Eq.\eqref{eq m<mu_q}.
Thus
${\rm ord}_{\Z_{2^{\ell}}^\times}(q)={\rm ord}_{\Z_{2^{\ell}}^\times}(5^{i'2^{i}})$.
Denote
$$
 {\rm ord}_{\Z_{2^{\ell}}^\times}(q)=2^{r}, ~~~~ \mbox{where $0\le r\le \ell-2$}.
$$
So we can write $5^{i'2^{i}}=5^{i'2^{\ell-r-2}}$.
Note that $2^{\mu_q+1}\,|\, 2^{\ell}$ (since $\ell > \mu_q$), then
$$
q\equiv (-1)^j\cdot 5^{i'2^{\ell-r-2}}~({\rm mod}~{2^{\mu_q+1}}) ,
$$
By Eq.\eqref{eq m<mu_q}, we have
\begin{align*}
& (-1)^j\cdot 5^{i'2^{\ell-r-2}}\,{\not\equiv} \pm 1~({\rm mod}~{2^{\mu_q+1}}); \\
& (-1)^j\cdot 5^{i'2^{\ell-r-2}}\,{\equiv} \pm 1~({\rm mod}~{2^{\mu_q}}).
\end{align*}
Note that (cf. Lemma~\ref{lem -1 . 5})
$$
{\rm ord}_{\Z_{2^{\mu_q+1}}^\times}(5)=2^{\mu_q-1}, ~~~~~
{\rm ord}_{\Z_{2^{\mu_q}}^\times}(5)=2^{\mu_q-2}.
$$
It follows that
$$
\ell-r-2<\mu_q-1, ~~~~~
\ell-r-2\ge \mu_q-2,
$$
and so
$\ell-\mu_q -1  < r \le \ell-\mu_q$. That forces $r = \ell-\mu_q $.
\end{proof}

\begin{example} \rm
For some small $q$ we list the $\mu_q$ as follows.
$$\begin{array}{|c|ccccccccccccc|} \hline
q    & 3 & 5 & 7 & 9 & 11 & 13 & 17 & 19 & 23 & 25 & 27 & 29 & 31\\ \hline
\mu_q& 2 & 2 & 3 & 3 &  2  & 2  &  4  &  2 &  3 &  3  & 4  &  2  &  5\\ \hline
\end{array}$$
\end{example}

\begin{lemma} \label{lem -1 notin}
Assume that $q$ is odd. Let $\ell \geq 1$ be an integer. Then
$-1\in\langle q\rangle_{\Z_{2^{\ell}}^\times}$
if and only if 
$q=-1$ in $\Z_{2^{\ell}}^\times$;
and if it is the case, then $\langle q\rangle_{\Z_{2^{\ell}}^\times}=\{\pm 1\}$
and ${\rm ord}_{\Z_{2^{\ell}}^\times}(q)=2$.
\end{lemma}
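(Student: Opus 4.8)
The plan is to split the equivalence into its two implications and to reduce the nontrivial direction to the structure theorem for $\Z_{2^{\ell}}^\times$ recorded in Lemma~\ref{lem -1 . 5}. The backward implication is immediate: if $q=-1$ in $\Z_{2^{\ell}}^\times$, then $-1=q\in\langle q\rangle_{\Z_{2^{\ell}}^\times}$, and moreover $\langle q\rangle_{\Z_{2^{\ell}}^\times}=\langle -1\rangle_{\Z_{2^{\ell}}^\times}=\{\pm 1\}$, which has order $2$ once $\ell\ge 2$. I would dispose of the degenerate case $\ell=1$ at the very start, where $\Z_{2}^\times=\{1\}$ is trivial and $-1=1$, so both sides of the equivalence hold vacuously; the only thing to note is that the claimed order is $1$ rather than $2$ here, consistent with $\{\pm 1\}=\{1\}$.

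For the forward implication, assume $-1\in\langle q\rangle_{\Z_{2^{\ell}}^\times}$. First I would clear the small case $\ell=2$ by inspection: $\Z_{4}^\times=\{1,3\}=\{\pm 1\}$ is cyclic of order $2$, so $-1=3$ lies in $\langle q\rangle$ only when $q=3=-1$. Then I assume $\ell>2$ and invoke Lemma~\ref{lem -1 . 5} to write each unit uniquely as $(-1)^a5^b$ with $a\in\{0,1\}$ and $0\le b<2^{\ell-2}$, using that $\langle 5\rangle_{\Z_{2^{\ell}}^\times}$ has order $2^{\ell-2}$. In these coordinates $-1$ is $(-1)^1 5^0$, and $q$ is $(-1)^a 5^b$ for some fixed $a,b$.

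The core computation, which I expect to be the crux, is to exploit the uniqueness of the direct-product expression. Writing $-1=q^k$ for some integer $k$ gives $(-1)^1 5^0=(-1)^{ak}5^{bk}$, hence $ak\equiv 1\pmod 2$ and $bk\equiv 0\pmod{2^{\ell-2}}$. The first congruence forces $a=1$ and $k$ odd; since an odd $k$ is a unit modulo $2^{\ell-2}$, the second then forces $b\equiv 0$, i.e.\ $b=0$. Therefore $q=(-1)^1 5^0=-1$ in $\Z_{2^{\ell}}^\times$, completing the equivalence. The ``if it is the case'' conclusions follow as in the backward step: $q=-1$ yields $\langle q\rangle_{\Z_{2^{\ell}}^\times}=\{\pm 1\}$ of order $2$. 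The only real obstacle is bookkeeping the two cyclic factors correctly and remembering that odd exponents act invertibly on the $\langle 5\rangle$-component; once the direct-product decomposition of Lemma~\ref{lem -1 . 5} is in hand, the argument is short.
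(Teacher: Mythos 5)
Your proposal is correct and takes essentially the same route as the paper: both prove the forward implication by writing $q=(-1)^a5^b$ via the direct-product decomposition of Lemma~\ref{lem -1 . 5}, equating components in $-1=q^k$, and using that the resulting $k$ must be odd (hence invertible modulo $2^{\ell-2}$) to force the $5$-component to be trivial. Your explicit handling of $\ell=1,2$ --- including the observation that for $\ell=1$ the order is $1$ rather than $2$, a degenerate case the paper dismisses as ``obvious'' --- is a minor refinement, not a different argument.
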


\begin{proof}
The result holds obviously for $\ell=1$ or $2$.
When $\ell \geq 3$, we can
write $q=(-1)^j\cdot 5^{i' 2^{i}}$ as in Eq.\eqref{eq q=-1.5}.
Assume that $-1\in\langle q\rangle_{\Z_{2^{\ell}}^\times}$,
 there is an integer~$h$ such that $q^h=-1$ (in $\Z_{2^{\ell}}^\times$), i.e.,
$(-1)^{jh}\cdot 5^{i' 2^{i}h}=-1$.
Because $\Z_{2^{\ell}}^\times$ is a direct product of subgroups, see Eq.\eqref{eq -1 . 5}, 
we get that
$$
 (-1)^{jh}=-1~~~\mbox{and}~~~5^{i' 2^{i}h}=1
~~~~ \mbox{(in $\Z_{2^{\ell}}^\times$).}
$$
The former implies that $j=1$ and $h$ is odd; then
the latter implies that $2^i=2^{\ell-2}$. That is,
$q=-1$ in $\Z_{2^{\ell}}^\times$.
The converse is trivial.
\end{proof}

We sketch a result about the so-called {\em balanced codes}
 which we will quote.
Let $I=\{0,1,\dots, n-1\}$ be the index set, i.e., $F^n=F^I$.
For any subset $I_*=\{ i_1,\dots, i_k\}$ of $I$, i.e., $0\le i_1<\dots<i_k<n$,
we have the projection $\rho_{I_*}: F^I\to F^{I_*}$,
$(a_0,\dots,a_{n-1})\mapsto (a_{i_1},\dots,a_{i_k})$.
Let $C\subseteq F^I$. If there are positive integers $k,s,t$ and subsets
$I_1,\dots,I_s$ of $I$ with cardinality $|I_j|=k$ for $j=1,\dots,s$
satisfying the following two:

 --- for $j=1,\dots,s$, the restricted map $\rho_{I_j}|_{C}: C\to F^{I_j}$ is bijective;

 --- for any $i\in I$, the number of the index $j$ that $i\in I_j$ equals $t$;

\noindent then we say that $C$ is a {\em balanced code} with
information sets $I_1,\cdots,I_s$, and obviously, $|C|=q^k$.
Let $\delta\in (0,1-q^{-1})$ be a real number. For a balanced code
$C\subseteq F^I$ as above,
by \cite[Corollary 3.4]{FL15} we have that
(where $h_q(\delta)$ is the $q$-entropy function as in Eq.\eqref{eq def h_q}):
\begin{align} \label{eq balanced <=}
\textstyle
\big|C^{\le\delta}\big|\le q^{kh_q(\delta)}, ~~~~~
\mbox{where}~~
 C^{\le\delta} =\big\{ {\bf c}
 \,\big|\, {\bf c}\in C, \, \frac{{\rm w}({\bf c})}{n}\le \delta \big\}.
\end{align}
Negacyclic codes are balanced, see~\cite[Lemma~II.8]{FL22}.
From Eq.\eqref{eq balanced <=} and \cite[Corollary 3.5]{FL15}
(or cf. \cite[Lemma 2.7]{FL24}),
we have the following result.

\begin{lemma}\label{lem balance}
If $A$ is an ideal of $\CR$, then the $\CR$-submodule $A\times A$ of
$\CR^2=\CR\times\CR$ is a balanced code,
hence for $0<\delta< 1-q^{-1}$,
$$ 
 |(A\times A)^{\le\delta}| \le q^{\dim_F (A\times A)\cdot h_q(\delta)}.
\eqno\qed
$$ 
\end{lemma}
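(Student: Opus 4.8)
The plan is to exhibit $A\times A$ as a balanced code in the sense defined above and then to invoke the estimate of Eq.\eqref{eq balanced <=} directly. First I would recall that an ideal $A$ of $\CR$ is precisely a negacyclic code of length $n$, and hence is balanced by \cite[Lemma~II.8]{FL22}. Writing $k=\dim_F A$, this furnishes information sets $I_1,\dots,I_s\subseteq\{0,1,\dots,n-1\}$ with $|I_j|=k$ for all $j$, such that each restricted projection $\rho_{I_j}|_A\colon A\to F^{I_j}$ is bijective, and such that every index $i\in\{0,\dots,n-1\}$ lies in exactly $t$ of the sets $I_j$. (If $A=0$ the claimed inequality is immediate, so I assume $A\ne 0$, whence $k,s,t$ are positive.)

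Next I would manufacture information sets for $A\times A$ out of those of $A$. Identify $F^n\times F^n$ with $F^{\tilde I}$, where $\tilde I=I^{(1)}\sqcup I^{(2)}$ is the disjoint union of two copies of $\{0,\dots,n-1\}$, so $|\tilde I|=2n$. For each pair $(a,b)\in\{1,\dots,s\}\times\{1,\dots,s\}$ set $I_{a,b}=I_a^{(1)}\cup I_b^{(2)}\subseteq\tilde I$, a set of cardinality $2k$. The restricted projection $\rho_{I_{a,b}}|_{A\times A}$ sends $(\mathbf a,\mathbf a')\mapsto\bigl(\rho_{I_a}(\mathbf a),\rho_{I_b}(\mathbf a')\bigr)$, i.e.\ it is the Cartesian product of the two bijections $\rho_{I_a}|_A$ and $\rho_{I_b}|_A$, hence it is a bijection $A\times A\to F^{I_a}\times F^{I_b}=F^{I_{a,b}}$. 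For the uniform-coverage condition, fix an index, say $i^{(1)}$ in the first copy; then $i^{(1)}\in I_{a,b}$ if and only if $i\in I_a$, and the number of pairs $(a,b)$ for which this holds is $\bigl(\#\{a\colon i\in I_a\}\bigr)\cdot s=t\cdot s$, independently of $i$; the second copy is treated symmetrically. Thus every index of $\tilde I$ is covered exactly $ts$ times, and $A\times A$ is balanced with $s^2$ information sets, each of common size $2k=\dim_F(A\times A)$.

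Finally I would apply Eq.\eqref{eq balanced <=} to the balanced code $A\times A$ of length $2n$, whose information-set size is $2k=\dim_F(A\times A)$, to conclude
\[
 \bigl|(A\times A)^{\le\delta}\bigr|\ \le\ q^{2k\,h_q(\delta)}\ =\ q^{\dim_F(A\times A)\cdot h_q(\delta)},
\]
which is exactly the asserted bound. I do not expect a genuine obstacle here: the argument is a routine transport of the balanced structure across a Cartesian product. The only points requiring care are bookkeeping ones, namely that the weight normalization in $(A\times A)^{\le\delta}$ is taken with respect to the correct length $2n$, and that the coverage count yields the \emph{same} multiplicity $ts$ for indices in both copies so that the balancedness hypothesis is genuinely met.
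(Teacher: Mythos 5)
Your proposal is correct and follows essentially the same route as the paper: the paper's (one-line) proof simply cites \cite[Lemma~II.8]{FL22} for the balancedness of the negacyclic code $A$ and \cite[Corollary 3.5]{FL15} for the fact that the product $A\times A$ is then balanced, after which Eq.\eqref{eq balanced <=} gives the bound. Your product construction of the information sets $I_a^{(1)}\cup I_b^{(2)}$ (yielding $s^2$ sets of size $2k$ with uniform coverage $ts$) is a correct hands-on verification of exactly the step the paper outsources to that citation.
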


\section{The operator ``$*$'' on $\CR=F[X]\big/\langle X^n+1\rangle$}
 \label{operator on CR}
In this section, we introduce an operator ``$*$'' on $\CR$,
and investigate some properties of it in the semisimple case, i.e., $\gcd(2n,q)=1$.
Recall that $n>1$ and
\begin{align} \label{eq CR=...}
 \CR=F[X]\big/\langle X^n+1\rangle=\{a_0+a_1X+\dots+a_{n-1}X^{n-1}\,|\,a_i\in F\}.
\end{align}
It is known from \cite[Lemma 4.8]{FL24} that the following map
\begin{align} \label{eq * operator}
 *: ~~ \CR\,\to\, \CR,~~
 a(X) \; \mapsto \; a^*(X)=a(X^{2n-1})\!\!\pmod{X^n+1},
\end{align}
is an algebra automorphism
(it is a special case of \cite[Lemma 4.8]{FL24}).
We say that the algebra automorphism Eq.\eqref{eq * operator}
is an operator ``$*$'' on~$\CR$ .

Note that the symbol ``$X$'' stands specifically for the variable of polynomials.
In $\CR$ we denote ``$x$'' to be the element representative by the monomial $X$.
By~Eq.\eqref{eq CR=...},  $1, x, \dots, x^{n-1}$ are a basis of $\CR$ and
$x^n=-1$, i.e,.
\begin{align} \label{eq R=F+Fx+...}
\CR=
F\oplus Fx\oplus\dots\oplus Fx^{n-1}, ~~~~
\mbox{with relation}~~ x^n=-1.
\end{align}
Hence $x$ is an unit (invertible element) of $\CR$.
We can write $a(x)=\sum_{i=0}^{n-1}a_ix^i\in\CR$
and the expressions $x^{-1}$, $a(x^{-1})$ etc. make sense.
Since 
${\rm ord}_{\CR^\times}(x)=2n$ and
$x^{2n-1}=x^{-1}$,
the operator ``$*$'' in Eq.\eqref{eq * operator} can be rewritten as
\begin{align} \label{eq * operator x}
*:~\CR \,\to\, \CR, ~~~ a(x) \;\mapsto\;a^*(x)=a(x^{-1}).
\end{align}

\begin{lemma} \label{lem order of *}
The operator ``$*$'' in Eq.\eqref{eq * operator}
(in Eq.\eqref{eq * operator x} equivalently)
is an algebra automorphism of $\CR$ of order~$2$.
\end{lemma}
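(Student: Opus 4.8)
The plan is to verify directly that the map $*$ in Eq.\eqref{eq * operator x} is a well-defined algebra automorphism and that applying it twice yields the identity, so that its order is exactly $2$. Since the paper already asserts (citing \cite[Lemma 4.8]{FL24}) that $*$ is an algebra automorphism, the genuine content of this lemma is the order computation; I would nonetheless sketch why $*$ is an automorphism for completeness. First I would observe that in $\CR$ the element $x$ satisfies $x^n=-1$, hence $\mathrm{ord}_{\CR^\times}(x)=2n$ and $x^{-1}=x^{2n-1}$, exactly as recorded in Eq.\eqref{eq R=F+Fx+...}. Thus $*$ sends $a(x)=\sum_{i=0}^{n-1}a_ix^i$ to $a(x^{-1})=\sum_{i=0}^{n-1}a_ix^{-i}$, and because $x$ is a unit this is a genuine element of $\CR$.

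Next I would check the algebra-homomorphism properties. The map is clearly $F$-linear since substitution $X\mapsto X^{2n-1}$ is linear in the coefficients, and it fixes the constants $F$ and the identity $1_\CR$. For multiplicativity, I would use that $*$ is induced by the substitution $x\mapsto x^{-1}$: for $a(x),b(x)\in\CR$ we have $\bigl(a(x)b(x)\bigr)^*=a(x^{-1})b(x^{-1})=a^*(x)b^*(x)$, where the key point is that substituting $x\mapsto x^{-1}$ respects the relation $x^n=-1$ because $(x^{-1})^n=(x^n)^{-1}=(-1)^{-1}=-1$. This is precisely what makes the substitution descend to a well-defined ring endomorphism of the quotient $\CR=F[X]/\langle X^n+1\rangle$, rather than merely of $F[X]$.

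Finally, for the order-$2$ claim, I would compute the composite $*\circ *$ on the generator $x$: applying $*$ twice sends $x\mapsto x^{-1}\mapsto (x^{-1})^{-1}=x$, so $(a^*)^*(x)=a^*(x^{-1})=a\bigl((x^{-1})^{-1}\bigr)=a(x)$, giving $*\circ *=\mathrm{id}_\CR$. Hence $*$ is an involution, so its order divides $2$. To conclude the order is \emph{exactly} $2$ rather than $1$, I must exhibit at least one element not fixed by $*$; since $n>1$, the element $x$ itself works, because $x^*=x^{-1}=x^{2n-1}=-x^{n-1}\neq x$ (the monomials $x$ and $x^{n-1}$ are distinct basis vectors when $n>2$, and when $n=2$ one has $x^*=x^{-1}=-x\neq x$). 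Therefore $*\neq\mathrm{id}_\CR$, and the order is $2$.

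The only delicate point, and the step I expect to require the most care, is confirming that $*$ is well defined on the quotient, i.e.\ that the substitution $X\mapsto X^{2n-1}$ carries the ideal $\langle X^n+1\rangle$ into itself; this reduces to checking $(X^{2n-1})^n+1\equiv 0\pmod{X^n+1}$, which follows from $x^{-n}=-1$ as noted above. Everything else is a routine verification of linearity and multiplicativity together with the elementary observation that $n>1$ forces $x$ to be moved by $*$.
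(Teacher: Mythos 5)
Your proof is correct and takes essentially the same route as the paper's: cite \cite[Lemma 4.8]{FL24} for the automorphism property, compute that applying $*$ twice fixes every $a(x)$, and rule out the identity by noting $x^*=x^{-1}\neq x$ for $n>1$ (your extra verification of well-definedness and multiplicativity just fills in what the paper delegates to the citation). One small caveat: your $n=2$ case ``$x^*=-x\neq x$'' implicitly needs $q$ odd (in characteristic $2$ one has $-x=x$), but the paper's own step ${\rm ord}_{\CR^\times}(x)=2n$ carries the same implicit assumption, which is covered by the section's standing semisimple hypothesis $\gcd(2n,q)=1$.
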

\begin{proof}
By \cite[Lemma 4.8]{FL24}, the operator ``$*$'' is
an algebra automorphism of $\CR$.
For any $a(x)\in\CR$, by Eq.\eqref{eq * operator x} we have
\begin{align*}
\Big(\sum_{i=0}^{n-1} a_ix^i\Big)^{**}
=\Big(\sum_{i=0}^{n-1} a_i (x^{-1})^i\Big)^*
=\sum_{i=0}^{n-1} (a_i ((x^{-1})^{-1})^i
=\sum_{i=0}^{n-1} a_ix^i.
\end{align*}
Thus the order of the operator ``$*$'' equals $1$ or $2$.
Since ${\rm ord}_{\CR^\times}(x)=2n$ and $n>1$,
 $x^{-1}\ne x$. So
the operator ``$*$'' is not the identity operator.
In conclusion, the order of the operator ``$*$'' equals $2$.
\end{proof}

In the semisimple case, the $e_1,\dots,e_t$ in Eq.\eqref{eq R= Re_1+...}
are all primitive idempotents of $\CR$;
correspondingly, $\CR e_i=\CR\hat \psi_i(X)$ for $i=1,\dots,t$ are
all minimal ideals of~$\CR$. 
Thus the operator ``$*$'' permutes the primitive idempotents.
We will determine this permutation on $e_1,\dots,e_t$ by analysing
the $q$-orbits.
Recall that in the semisimple case, 
the set of all $q$-cosets within $1+2\Z_n$ as follows
 $$
 (1+2\Z_n)\big/\langle q\rangle_{\Z_{2n}^\times} =\{Q_1,\;\dots,\; Q_t\},
 ~~~~~ (\mbox{see Eq.\eqref{eq q-orbits}}).
$$

\begin{lemma} \label{lem i to i^*}
Let notation be as in Eq.\eqref{eq R= Re_1+...}. Assume that $\gcd(2n,q)\!=\!1$.
Let $Q_i\in (1\!+\!2\Z_{n})\big/\langle q\rangle_{\Z_{2n}^\times}$ for $i=1,\dots, t$
as in~Eq.\eqref{eq q-orbits}.

{\bf(1)}
 $-Q_i=\{-v\,|\,v\in Q_i\}$ is still an $q$-orbit of $1+2\Z_n$; hence
  there is an index $i^*\in{{\cal T}=\{1,\dots,t\}}$ such that $-Q_i=Q_{i^*}$.
In particular, $i\mapsto i^*$ is a permutation on ${\cal T}=\{1,\dots,t\}$.

{\bf(2)} Let $-Q_i=Q_{i^*}$ as above. Then
$(\CR \hat\psi_i(x))^*=\CR\hat\psi_i^*( x)=\CR\hat\psi_{i^*}(x)$.

\end{lemma}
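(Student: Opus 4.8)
The plan is to prove (1) by pure combinatorics of the $q$-orbits, and then to transport that combinatorics across the algebra automorphism ``$*$'' in (2) by means of the evaluation maps at the roots $\xi^j$. For (1), I would first observe that negation $v\mapsto -v$ (that is, $v\mapsto 2n-v$) is a bijection of $\Z_{2n}$ preserving the subset $1+2\Z_n$: since $2n$ is even and $v$ is odd, $-v$ is again odd, so $-v\in 1+2\Z_n$. Next, negation commutes with the $\langle q\rangle_{\Z_{2n}^\times}$-action because $-(qv)=q(-v)$ in the commutative ring $\Z_{2n}$; hence $v\mapsto -v$ carries each $q$-orbit onto a $q$-orbit and does so bijectively on the set $\{Q_1,\dots,Q_t\}$ of $q$-orbits inside $1+2\Z_n$. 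In particular $-Q_i=Q_{i^*}$ for a unique $i^*\in{\cal T}$, and $i\mapsto i^*$ is a permutation of ${\cal T}$ (indeed an involution, since $-(-Q_i)=Q_i$). This settles (1).

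For (2), the bridge between ``$*$'' and the index map $i\mapsto i^*$ will be the evaluation homomorphisms. For each $j\in 1+2\Z_n$ let $\pi_j:\CR\to F(\xi)$, $a(x)\mapsto a(\xi^j)$; by Remark~\ref{rk q-coset} the minimal polynomial of $\xi^j$ over $F$ is $\psi_i$ when $j\in Q_i$, so $\ker\pi_j=\psi_i(x)\CR=\bigoplus_{k\ne i}\CR e_k$, and consequently membership $a\in\CR e_i$ is detected precisely by $\pi_{j'}(a)=0$ for all $j'\in 1+2\Z_n$ with $j'\notin Q_i$. The crucial identity is $\pi_j\circ *=\pi_{-j}$: indeed, by Eq.\eqref{eq * operator x}, $\pi_j(a^*)=a^*(\xi^j)=a\big((\xi^j)^{-1}\big)=a(\xi^{-j})=\pi_{-j}(a)$. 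Combining this with (1), for $a\in\CR e_i$ and any $j'\notin Q_{i^*}=-Q_i$ one has $-j'\notin Q_i$, whence $\pi_{j'}(a^*)=\pi_{-j'}(a)=0$; therefore $a^*\in\CR e_{i^*}$, i.e.\ $(\CR e_i)^*\subseteq\CR e_{i^*}$. Since ``$*$'' is bijective (Lemma~\ref{lem order of *}) and $\dim_F\CR e_i=|Q_i|=|{-Q_i}|=|Q_{i^*}|=\dim_F\CR e_{i^*}$ by Eq.\eqref{eq R= Re_1+...}, the inclusion is an equality $(\CR e_i)^*=\CR e_{i^*}$. Finally, as ``$*$'' is a surjective ring automorphism, $(\CR\hat\psi_i(x))^*=\CR\,(\hat\psi_i(x))^*=\CR\hat\psi_i^*(x)$, and combined with $\CR e_i=\CR\hat\psi_i(x)$ this gives the asserted chain $(\CR\hat\psi_i(x))^*=\CR\hat\psi_i^*(x)=\CR\hat\psi_{i^*}(x)$.

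I expect the main obstacle to be the bookkeeping in (2): the operator ``$*$'' is intrinsic to $\CR$, whereas the index $i^*$ is defined through the roots $\xi^j$ living in the extension $F(\xi)$, so the argument must pass carefully between the two worlds. The identity $\pi_j\circ *=\pi_{-j}$ together with the kernel description of $\pi_j$ is exactly what makes this passage clean, and it is where the verification deserves care. Should one prefer to avoid evaluation maps, a concrete alternative is to compute directly in $F(\xi)\otimes_F\CR$, using $x^{-1}-\xi^v=-\xi^v x^{-1}(x-\xi^{-v})$, that $\psi_i^*(x)=c\,x^{-|Q_i|}\psi_{i^*}(x)$ with the nonzero scalar $c=(-1)^{|Q_i|}\xi^{\sum_{v\in Q_i}v}$; since $c\,x^{-|Q_i|}$ is a unit, $\psi_i^*(x)$ and $\psi_{i^*}(x)$ generate the same ideal, hence have the same annihilator $\CR e_{i^*}$, and one then descends along the faithfully flat extension $F\subseteq F(\xi)$.
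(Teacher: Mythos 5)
Your proof is correct. Part (1) is essentially the paper's own argument (negation commutes with multiplication by $q$ and preserves $1+2\Z_n$), with the welcome extra observation that $i\mapsto i^*$ is an involution. For part (2), however, your main argument takes a genuinely different route: the paper computes directly in $\tilde{\CR}=F(\xi)[X]/\langle X^n+1\rangle$, showing $\psi_i^*(x)=(-x^{-1})^{d_i}\gamma_i\,\psi_{i^*}(x)$ with $\gamma_i=\xi^{\sum_{j\in Q_i}j}$ asserted to lie in $F$, and hence $\hat\psi_i^*(x)=(-x^{-1})^{n-d_i}\hat\gamma_i\,\hat\psi_{i^*}(x)$ with the prefactor a unit of $\CR$, which immediately forces equality of the generated ideals; you instead work with the evaluation homomorphisms $\pi_j$, the identity $\pi_j\circ{*}=\pi_{-j}$, the kernel description $\ker\pi_j=\psi_i(x)\CR=\bigoplus_{k\ne i}\CR e_k$ for $j\in Q_i$, and a dimension count to upgrade $(\CR e_i)^*\subseteq\CR e_{i^*}$ to equality. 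Each approach buys something: the paper's computation is more explicit, exhibiting the exact unit relating $\hat\psi_i^*$ and $\hat\psi_{i^*}$, but it tacitly relies on $\gamma_i\in F$ (true because multiplication by $q$ permutes $Q_i$, so $\gamma_i^q=\gamma_i$, though the paper does not justify this); your route avoids that point entirely, stays inside well-defined $F$-algebra maps (note $\xi^{jn}=(-1)^j=-1$ for odd $j$, so each $\pi_j$ is indeed defined on $\CR$), and sidesteps any descent from $F(\xi)$ to $F$, at the modest cost of invoking the bijectivity of ``$*$'' (Lemma~\ref{lem order of *}) and $|Q_i|=|Q_{i^*}|$. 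Finally, your closing ``alternative'' computation is essentially the paper's proof; and there the appeal to faithfully flat descent is unnecessary, since the scalar $c$ lies in $F$ by the Frobenius-invariance just noted, so the identity $\hat\psi_i^*(x)=c'\,\hat\psi_{i^*}(x)$ with $c'$ a unit already holds in $\CR$ itself.
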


\begin{proof}
{\bf(1)}. Let $Q_i=\{v_0,qv_0,\cdots,q^{k-1} v_0\}$ (may be repetitive),
where $k={\rm ord}_{\Z_{2n}^\times}(q)$.
Then $-Q_i=\{-v_0, q(-v_0),\cdots, q^{k-1}(-v_0)\}$ is still a $q$-orbit.

{\bf(2)}.
Let $\tilde F=F(\xi)$ where $\xi$ is a primitive $2n$-th root of unity
(see Remark~\ref{rk q-coset}) and
$\tilde{\CR}:=\tilde F[X]\big/\langle X^n+1\rangle$.
By Eq.\eqref{eq * operator x},
in $\tilde{\CR}$ 
we have the following computation:
\begin{align*}
\psi_i^*(x)
&=\prod_{j\in Q_i}(x^{-1}-\xi^j)
 =\prod_{j\in Q_i}\xi^{j} x^{-1} (\xi^{-j}-x)\\
 &=(-1)^{d_i} x^{-d_i}\xi^{\sum_{j\in{ Q_i}}j}\prod_{j\in Q_i} (x -\xi^{-j}),
\end{align*}
where $d_i=|Q_i|=\deg\psi_i(X)$, cf. Eq.\eqref{eq R= Re_1+...}.
Denote $\xi^{\sum_{j\in{ Q_i}}j}=\gamma_i$, which is a nonzero element of $F$.
We get
\begin{align*}
\psi_i^*(x) =(-1)^{d_i} x^{-d_i}\gamma_i\prod_{j\in -Q_i} (x -\xi^{j})
  =(- x^{-1})^{d_i}\gamma_i\cdot\psi_{i^*}(x),
\end{align*}
where
the last equality holds since  $-Q_i=Q_{i^*}$.
Thus
$$
\hat\psi_i^*(x)=\Big(\prod_{k\in{{\cal T}_{\{i\}}}}\psi_k(x)\Big)^*
=\prod_{k\in{{\cal T}_{\{i\}}}}\psi_k^*(x)
=\prod_{k\in{{\cal T}_{\{i\}}}}( -x^{-1})^{d_k}
 \gamma_k\cdot \psi_{k^*}(x).
$$
Obviously, $\sum_{k\in{{\cal T}_{\{i\}}}}d_k=n-d_i$.
Set $\prod_{k\in{\cal T}_{\{i\}}}\gamma_k=\hat\gamma_i$.
Note that $k\in{{\cal T}_{\{i\}}}$
is equivalent to that $k^{*}\in{{\cal T}_{\{i^*\}}}$.
So
$$
\hat\psi_i^*(x)=(-x^{-1})^{n-d_i}\hat\gamma_i
\cdot\prod_{k\in{{\cal T}_{\{i\}}}}\psi_{k^*}(x)
=(-x^{-1})^{n-d_i}\hat\gamma_i\cdot\hat\psi_{i^*}(x).
$$
Note that $(-x^{-1})^{n-d_i}\hat\gamma_i$ is an invertible element of $\CR$.
We obtain that $\CR\hat\psi_i^*(x)=\CR\hat\psi_{i^*}(x)$.
%
\end{proof}

\begin{remark} \label{rk * permutation} \rm
 By Lemma \ref{lem i to i^*}(2),
$\CR e_i^*=(\CR e_i)^*=(\CR\hat\psi_i(x))^*=\CR\hat\psi_{i^*}(x)=\CR e_{i^*}$,
i.e.,
$$
 e_i^*=e_{i^*}, ~~~~~~~~ i=1,\dots,t.
$$
In a word, the operator ``$*$'' permutes the primitive idempotents $e_1,\cdots,e_t$
in Eq.\eqref{eq R= Re_1+...}
as the same as the multiplication by ``$-1$'' permutes the $q$-orbits
${(1+2\Z_{n})\big/\langle q\rangle_{\Z_{2n}^\times}}
  =\{Q_1,\dots,Q_t\}$ in Eq.\eqref{eq q-orbits}.
And by Lemma \ref{lem order of *},
this permutation on $Q_1,\;\dots,\; Q_t$ is of order $2$ or $1$.

\end{remark}

\begin{lemma} \label{lem -1 in q}
Keep the notation in Lemma~\ref{lem i to i^*}.

{\bf (1)} If $-1\in\langle q\rangle_{\Z_{2n}^\times}$, then
$Q_{i^*}=Q_i$ for $i=1,2,\dots,t$.

{\bf (2)} If $n= 2^\ell$ {\rm($2n=2^{\ell+1}$)} with an integer $\ell \geq 1$ and
$-1\notin \langle q\rangle_{\Z_{2^{\ell+1}}^\times}$,
then $Q_{i^*}\neq Q_i$ for $ i=1,2,\dots,t$.
\end{lemma}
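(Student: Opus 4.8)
The plan is to unwind the definition $-Q_i=Q_{i^*}$ and reduce both parts to the single question of whether $-1$ lies in the cyclic group $\langle q\rangle$ acting on the relevant residue ring.

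For part \textbf{(1)}, I would fix an index $i$ and choose a representative $v_0\in Q_i$, so that $Q_i=\{q^s v_0\mid s\ge 0\}$ is the $q$-orbit of $v_0$ in $\Z_{2n}$. The hypothesis $-1\in\langle q\rangle_{\Z_{2n}^\times}$ furnishes an integer $h$ with $q^h\equiv -1\pmod{2n}$. Multiplying by $v_0$ gives $-v_0\equiv q^h v_0\pmod{2n}$, and since $Q_i$ is closed under multiplication by $q$ this shows $-v_0\in Q_i$, whence $-Q_i\subseteq Q_i$. Because $v\mapsto -v$ is a bijection of $\Z_{2n}$, we have $|-Q_i|=|Q_i|$, so in fact $-Q_i=Q_i$. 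By the defining relation $Q_{i^*}=-Q_i$ this yields $Q_{i^*}=Q_i$ for every $i$, as required. Note this argument uses nothing about $n$ beyond $\gcd(2n,q)=1$, so it holds in the general semisimple case.

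For part \textbf{(2)}, where $n=2^\ell$ and $2n=2^{\ell+1}$, I would argue by contradiction. The key structural input is Remark~\ref{rk q-coset}(3), which identifies the ambient set $1+2\Z_{2^\ell}$ with the unit group $\Z_{2^{\ell+1}}^\times$; in particular every orbit representative $v_0$ is invertible modulo $2^{\ell+1}$. Suppose $Q_{i^*}=Q_i$, i.e.\ $-Q_i=Q_i$, for some $i$. Then $-v_0\in Q_i$, so $-v_0\equiv q^h v_0\pmod{2^{\ell+1}}$ for some $h$. Cancelling the unit $v_0$ gives $q^h\equiv -1\pmod{2^{\ell+1}}$, that is $-1\in\langle q\rangle_{\Z_{2^{\ell+1}}^\times}$, contradicting the hypothesis $-1\notin\langle q\rangle_{\Z_{2^{\ell+1}}^\times}$. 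Hence $Q_{i^*}\neq Q_i$ for all $i$.

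I do not anticipate a genuine obstacle here: both parts hinge only on the orbit being (non-)closed under negation, together with the cancellation of an odd residue modulo a power of two. The one point that needs care is the legitimacy of that cancellation, which rests on every orbit representative being odd — automatic in part \textbf{(2)} because in the case $n=2^\ell$ the representatives exhaust the full unit group $\Z_{2^{\ell+1}}^\times$. As an alternative one could route part \textbf{(2)} through Lemma~\ref{lem -1 notin}, but the direct cancellation argument above is shorter and self-contained.
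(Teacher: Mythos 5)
Your proof is correct and matches the paper's argument in substance: part (1) is the same observation that $-1\in\langle q\rangle$ lets the sign be absorbed into the orbit, and your cancellation of the unit $v_0$ modulo $2^{\ell+1}$ in part (2) is exactly the element-wise form of the paper's coset criterion $Ha=(-1)Ha \iff -1\in H$ for $H=\langle q\rangle_{\Z_{2^{\ell+1}}^\times}$. You also correctly isolate the point where $n=2^\ell$ is needed (orbit representatives being units), which is precisely what the paper's counterexample with $n=10$, $q=3$ illustrates.
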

\begin{proof}
{\bf (1)} Let $Q_i=\{v_0,qv_0,\cdots,q^{k-1} v_0\}$
where $k={\rm ord}_{\Z_{2n}^\times}(q)$.
Note that $\langle q\rangle_{\Z_{2n}^\times}=\{1,q,\dots,q^{k-1}\}$
is a finite group.
If $-1$ is an element of $\langle q\rangle_{\Z_{2n}^\times}$,
then $-1,-q, \dots, -q^{k-1}$ are just a reorder of $1,q,\dots,q^{k-1}$.
Thus
$$Q_{i^*}=-Q_i=\{-v_0, -qv_0,\cdots, -q^{k-1}v_0\}=Q_i.$$

{\bf (2)}
Note that $\langle q\rangle_{\Z_{2^{\ell+1}}^\times}$
is a subgroup of $\Z_{2^{\ell+1}}^\times$, see Remark~\ref{rk q-coset}(3).
Denote $G=\Z_{2^{\ell+1}}^\times$,
$H=\langle q\rangle_{\Z_{2^{\ell+1}}^\times}$ and $b=-1$ for convenience.
Then $H$ is a subgroup of $G$,
and all the cosets of  $H$ in $G$ are just all the $q$-orbits.
For any coset $Ha$ ($a\in G$), $Ha=bHa=Hba$ if and only if
$b=(ba)a^{-1}\in H$. Thus, if $b\notin H$ then $b(Ha)\ne Ha$ for any coset $Ha$.
\end{proof}

About Lemma~\ref{lem -1 in q}(2), the condition ``$n=2^\ell$'' is necessary.
We now present a counterexample.
\begin{example} \rm
Take $q=3$, $n=10$. Then
$\langle q\rangle_{\Z_{2n}^\times}=\{1,3,9,7\}$, so
$-1\notin \langle q\rangle_{\Z_{2n}^\times}$.
The $\{5,15\}$ is a $q$-orbit of $1+2\Z_{10}\subseteq\Z_{20}$.
But $-\{5,15\}=\{-5,-15\}=\{5,15\}$.
\end{example}

Therefore, the case ``$n=2^m$, $m\ge 1$, and $q$ is odd'' is easy to control.
Assume that it is the case.
Instead of $\CR$, we denote specifically
\begin{align*}
 {\cal A}=F[X]\big/\langle X^{2^m}+1\rangle.
\end{align*}
Set ${\rm ord}_{\Z_{2^{m+1}}^{\times}}(q)=2^r$ ($0\le r\le m-1$).
By Remark~\ref{rk q-coset}(3),
 $1+2\Z_{2^m}=\Z_{2^{m+1}}^\times=\{1,3,\dots, 2^{m+1}-1\}$
with cardinality $|\Z_{2^{m+1}}^\times|=2^m$,
and the $q$-orbits in the following set 
\begin{align} \label{eq Z_2^m/q}
 \Z_{2^{m+1}}^\times\big/\langle q\rangle_{\Z_{2^{m+1}}^\times}
=\{Q_1,\dots,Q_t\}
\end{align}
are just the cosets of the subgroup
$\langle q\rangle_{\Z_{2^{m+1}}^\times}$
in the multiplication group $\Z_{2^{m+1}}^\times$;
and the cardinality of the cosets
\begin{align} \label{eq Q r=...}
 |Q_i|={\rm ord}_{\Z_{2^{m+1}}^{\times}}(q)=2^r,
 ~~~~~~ i=1,\dots,t;
\end{align}
hence
\begin{align}
t=|\Z_{2^{m+1}}^\times|\big/{\rm ord}_{\Z_{2^{m+1}}^{\times}}(q)
=2^{m-r}.
\end{align}
%
By Eq.(\ref{eq R= Re_1+...}), we write
\begin{align}\label{eq A= Ae 1+...}
   \CA =F[X]\big/\langle X^{2^m}+1\rangle
 =\CA e_1 \oplus \CA e_2 \oplus\dots \oplus \CA e_t,
 \end{align}
 where each $\CA e_i=\CA \hat\psi_i(X)$ corresponds the $q$-coset $Q_{i}$.
With the notation listed above, we summarize
the action of the operator ``$*$'' on the primitive idempotents as follows.

\begin{remark} \label{rk r hat t ...} \rm
Assume that $n=2^m$, $m\ge 1$, and $q$ is odd.

{\bf(1)} If $q\equiv -1~({\rm mod}~2^{m+1})$, then
by Lemma~\ref{lem -1 in q}(1) and Remark \ref{rk * permutation}, we have
 $e_i^*=e_{i}$ for $i=1,\dots,t$,
and so $(\CA e_i)^{*}=\CA e_i^*=\CA e_i$; in this case
we say that  $\CA e_i$ is a {\em ``$*$''-stable}
algebra (not a subalgebra of $\CA$ in general).
The ideal $\CA e_i$ is a field extension over $F$.

{\bf(2)} If $q\,{\not\equiv} -\!1~({\rm mod}~2^{m+1})$
(i.e.,  $-1\notin \langle q\rangle_{\Z_{2^{m+1}}^\times}$,
see Lemma~\ref{lem -1 notin}), then
by Lemma~\ref{lem -1 in q}(2) and Remark \ref{rk * permutation},
the operator ``$*$'' partitions all the primitive idempotents of $\CA$ into
$t/2=2^{m-r-1}$ pairs (for convenience we denote $\hat t=t/2$):
\begin{align} \label{eq hat t=...}
 e_1,\; e_1^*,\; \dots,\; e_{\hat t},\; e_{\hat t}^*,
~~~~~~~ \hat t=2^{m-r-1}.
\end{align}
Thus ${\cal A}=F[X]\big/\langle X^{2^m}+1\rangle$ can be rewritten as
\begin{align}\label{eq CA ei}
\begin{array}{ll}
& \CA=\CA e_1\oplus \CA e_1^* \oplus\; \cdots\;\oplus
 \CA e_{\hat t}\oplus \CA e_{\hat t}^*\\[5pt]
& \dim_F \CA e_i=\dim_F \CA e_i^* =|Q_i|=2^{r}, ~~~~ i=1, \dots, \hat t.
\end{array}
\end{align}
The $\CA e_i,\, \CA e_i^*$, $i=1,\dots, \hat t$, are all minimal ideals of $\CA$.
Further, set $\widehat e_i= e_i +  e_i^*$, $i=1,\dots, \hat t$.
Then $\widehat e_i^{*}= e_i^* +  e_i^{**}=\widehat e_i$,
$\CA\widehat e_i=\CA e_i\oplus\CA e_i^*$, and
\begin{align} \label{eq A=A widehat e}
\begin{array}{ll}
\CA=\CA \widehat e_1\oplus \; \cdots \; \oplus \CA \widehat e_{\hat t}, ~
&
\dim_F \CA\widehat e_i= 2^{r+1}, ~~ i=1,\dots, \hat t. \\[3pt]
1=\widehat e_1+\dots+\widehat e_{\hat t},
&
\widehat e_i \widehat e_j
 =\begin{cases} \widehat e_i, & i=j;\\ 0, & i\neq j. \end{cases}
\end{array}
\end{align}
The $\CA\widehat e_i$, $i=1,\dots, \hat t$,
are all minimal ``$*$''-stable ideals of $\CA$, i.e.,
 $(\CA\widehat e_i)^{*}=\CA\widehat e_i$.
By Eq.\eqref{eq A=A widehat e}, any $a\in\CA$ can be written as
$$
 a=a_1+\dots+a_{\hat t}, ~~~~\mbox{where }~ a_i=a\widehat e_i, ~~ i=1,\dots, \hat t.
$$
With this expression, for any $a,b\in\CA$ we have
\begin{align} \label{eq ab=a_1b_1+...}
  ab=a_1b_1+\dots+a_{\hat t}b_{\hat t}.
\end{align}

{\bf(3)} If $m\geq \mu_q$ where $\mu_q$ is defined in Eq.\eqref{eq m<mu_q},
then 
$q\,{\not\equiv}-\!1~({\rm mod}~2^{m+1})$ (since $m+1>\mu_q$),
so that all conclusions in the above (2) are applied.
And by Lemma~\ref{lem ord q}, we have that
\begin{align} \label{eq r=...}
r=m-\mu_q+1 ~~(\mbox{cf. Eq.\eqref{eq Q r=...}}),
~~~ \mbox{hence } ~\hat t=2^{\mu_q-2}
~(\mbox{cf. Eq.\eqref{eq hat t=...}}).
\end{align}
\end{remark}

\section{Existence of self-dual $2$-quasi negacyclic codes} \label{existence}
In this section $\CR=F[X]\big/\langle X^n+1 \rangle$
as before, cf. Eq.\eqref{eq R_lambda}.
We consider the existence of the self-dual $2$-quasi negacyclic codes
of length $2n$ over the finite field $F$.
Our discussion divide into two cases: $n$ is odd or $n$ is even.

With the identification Eq.(\ref{eq CR to F}),
the inner product on $\CR$ is as follows:
for $a(X)=\sum\limits_{i=0}^{n-1}a_{i}X^{i}$, 
$b(X)=\sum\limits_{i=0}^{n-1}b_{i}X^{i}\in\CR$,
\begin{align*}
\big\langle a(X),b(X)\big\rangle=a_0b_0+\dots+a_{n-1}b_{n-1}.
\end{align*}
Similarly, with the identification Eq.(\ref{eq CR CR to F F}),
the inner product on $\CR ^{2}=\CR \times \CR$ is as follows:
for $(a(x),\,b(x)),\,(a'(x),\,b'(x))\in{\CR ^{2}}$,
\begin{align*}
\big\langle (a(X),b(X)),\, (a'(X),b'(X))\big\rangle=\sum_{i=0}^{n-1}a_ia'_i +\sum_{i=0}^{n-1}b_ib'_i,
\end{align*}
where $a(X)\!=\!\sum\limits_{i=0}^{n-1}a_{i}X^{i},
b(X)\!=\!\sum\limits_{i=0}^{n-1}b_{i}X^{i}, a'(X)\!=\!\sum\limits_{i=0}^{n-1}a'_{i}X^{i},
b'(X)\!=\!\sum\limits_{i=0}^{n-1}b'_{i}X^{i}$ in~$\CR$.
The dual codes and the self-dual codes are explained as in Introduction.
It is well-known that for any ($2$-quasi) negacyclic code $C$,
the dual code $C^\bot$ is still a ($2$-quasi) negacyclic code,
cf. \cite{D12, DL} (or cf. \cite[Lemma 3.10]{FL24}).

In the following, we denote
 $$\CR_1=F[X]\big/\langle X^n-1\rangle.$$
Similarly to Remark~\ref{rk identification},
the ideals of $F[X]\big/\langle X^n-1\rangle$ are identified with
the cyclic codes over $F$ of length $n$;
and the $\CR_1$-submodules of $\CR_1^2=\CR_1\times \CR_1$
are identified with the $2$-quasi cyclic codes over $F$ of length $2n$.

If $q$ is even, then $1=-1$ and the ($2$-quasi) negacyclic codes
over $F$ are just the ($2$-quasi) cyclic codes over $F$.
If $n$ is odd, then
 the ($2$-quasi) negacyclic codes over $F$ 
are also related perfectly to the ($2$-quasi) cyclic codes over $F$. 
Precisely, we state it in the following paragraph.

Assume that $n$ is odd,
it is known (see \cite{D12}, or \cite[Lemma 6.1, Corollary 6.2]{FL24}) that
the following map
\begin{align*}
\eta: ~ \CR_1 \, \to\, \CR, 
 ~~~ a(X)\; \mapsto\; a(-X)~({\rm mod}~{X^n+1}),
\end{align*}
is an algebra isomorphism and
satisfies that for any $a(X),a'(X)\in \CR_1$ the following two hold:
\begin{itemize}
\item[(i)]
 the weight ${\rm w}\big(\eta(a(X))\big)={\rm w}\big(a(X)\big)$;
\item[(ii)]
the inner product
 $\big\langle\eta(a(X)),\eta(a'(X))\big\rangle=\big\langle a(X),a'(X)\big\rangle$.
\end{itemize}
And the algebra isomorphism $\eta$ is extended to the following map
\begin{align}\label{eq eta 2}
\eta^{(2)}: ~ \CR_1^2 \, \to\, \CR^2, 
 ~~~ \big(a(X),a'(X)\big)\; \mapsto\; \big(\eta(a(X)),\eta(a'(X))\big),
\end{align}
which is a module isomorphism and
preserves the weight and the inner product similarly to the above (i) and (ii).
With the notation as above, we have the following
(cf. \cite[Theorem 6.3]{FL24}).

\begin{remark} \label{rk n is odd} \rm
 Assume that $n$ is odd.

{\bf(1)} A linear code $C$ of $\CR_1^2$ is a $2$-quasi cyclic codes of length $2n$
if and only if $\eta^{(2)}(C)$ is a $2$-quasi negacyclic code of length $2n$;
and in that case, we have that
${\rm R}\big(\eta^{(2)}(C)\big)={\rm R}\big(C\big)$ and
$\Delta\big(\eta^{(2)}(C)\big)=\Delta\big(C\big)$,
and $\eta^{(2)}(C)$ is self-dual if and only if $C$ is self-dual.

{\bf(2)}
Note that $\eta$ and $\eta^{(2)}$ are isomorphisms.
If $C$ of $\CR^2$ is a self-dual $2$-quasi negacyclic code of length $2n$,
then the inverse image of $C$ in $\CR_1^2$
is a self-dual $2$-quasi cyclic code of length $2n$.
\end{remark}

By Remark~\ref{rk n is odd}, we get that
$q$-ary self-dual $2$-quasi negacyclic codes of length~$2n$ ($n$ is odd) exist
if and only if self-dual $2$-quasi cyclic codes of length~$2n$ exist.
By \cite[Proposition 6.1]{LS01} (for semisimple case) and
\cite[Theorem 6.1]{LS03} (for general case),
the $q$-ary self-dual $2$-quasi cyclic codes of length $2n$ exist if and only if
$q\,{\not\equiv}-\!1~({\rm mod}~4)$.
Thus, we get the following theorem immediately.

\begin{theorem}\label{th existence odd}
 If $n$ is odd, then $q$-ary self-dual $2$-quasi negacyclic codes
of length $2n$ exist if and only if $q\,{\not\equiv}-\!1~({\rm mod}~4)$.
\end{theorem}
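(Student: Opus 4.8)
The plan is to reduce the existence question to the already-settled case of self-dual $2$-quasi cyclic codes, exploiting that $n$ is odd. First I would invoke the module isomorphism $\eta^{(2)}\colon\CR_1^2\to\CR^2$ recorded in Remark~\ref{rk n is odd}, where $\CR_1=F[X]/\langle X^n-1\rangle$. The crucial point is that $\eta^{(2)}$ is built from the algebra isomorphism $\eta\colon a(X)\mapsto a(-X)$, which is well-defined precisely because $n$ is odd (so that $X\mapsto -X$ identifies $\langle X^n-1\rangle$ with $\langle X^n+1\rangle$), and which preserves both the Hamming weight and the inner product. Consequently $\eta^{(2)}$ carries $2$-quasi cyclic codes bijectively onto $2$-quasi negacyclic codes of the same length $2n$, and it carries self-dual codes to self-dual codes in both directions.

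From this I would extract the equivalence: a $q$-ary self-dual $2$-quasi negacyclic code of length $2n$ exists if and only if a $q$-ary self-dual $2$-quasi cyclic code of length $2n$ exists. Indeed, given a self-dual $C\subseteq\CR_1^2$, its image $\eta^{(2)}(C)$ is self-dual in $\CR^2$ by Remark~\ref{rk n is odd}(1); conversely, given a self-dual $2$-quasi negacyclic code, its preimage under $\eta^{(2)}$ is a self-dual $2$-quasi cyclic code by Remark~\ref{rk n is odd}(2). Hence the two existence problems coincide.

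Finally I would quote the known characterization for the cyclic case: by \cite[Proposition 6.1]{LS01} in the semisimple case and \cite[Theorem 6.1]{LS03} in general, the $q$-ary self-dual $2$-quasi cyclic codes of length $2n$ exist if and only if $q\not\equiv -1\pmod 4$. Combining this with the equivalence above gives the theorem at once.

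I expect no serious obstacle internal to this argument, since it is essentially immediate once the transport isomorphism $\eta^{(2)}$ is available; all the genuine work has been localized elsewhere, either in the weight- and inner-product-preserving properties of $\eta$ (established in \cite{D12} and \cite[Lemma 6.1, Corollary 6.2]{FL24}) or in the cyclic existence result of Lin and Sol\'e. The one point demanding care is the hypothesis that $n$ be odd: it is exactly the oddness of $n$ that makes $X\mapsto -X$ send $\langle X^n-1\rangle$ to $\langle X^n+1\rangle$ and hence makes $\eta$ an isomorphism between $\CR_1$ and $\CR$. Without it the reduction collapses, and one would have to treat negacyclic codes directly --- which is precisely what the even case will require.
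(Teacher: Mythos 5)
Your proposal is correct and follows essentially the same route as the paper: the paper likewise uses the isomorphism $\eta^{(2)}$ of Remark~\ref{rk n is odd} (valid exactly because $n$ is odd) to identify self-dual $2$-quasi negacyclic codes with self-dual $2$-quasi cyclic codes of length $2n$, and then quotes \cite[Proposition 6.1]{LS01} and \cite[Theorem 6.1]{LS03} for the condition $q\,{\not\equiv}\,-1 \pmod 4$. No gaps; your remark that the oddness of $n$ is what makes $X\mapsto -X$ carry $\langle X^n-1\rangle$ to $\langle X^n+1\rangle$ is precisely the point the paper relies on.
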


It is well-known that $\CR$ is a principal ideal ring,
i.e., any ideal of $\CR$ is generated by one element.
Note that an $\CR$-submodules of $\CR^2$
may be not generated by one element.
However, the $\CR$-submodules of $\CR^2$ generated by one element
will play an active role in this article.
For $(a,b)=(a(X),b(X))\in\CR^2$, by
\begin{align} \label{eq C_a,b}
 C_{a,b}=\big\{f(X)\big(a(X),b(X)\big)\,\big|\, f(X)\in\CR\big\}
\end{align}
we denote the $2$-quasi negacyclic code generated by $(a,b)$.
If $a$ is invertible, then it is easy to check that
$C_{a,b}=C_{1,g}$, where $g=a^{-1}b$ and $C_{1,g}$
is generated by $(1,g)=(1,g(X))\in\CR^2$.
There is an effective way to recognize if $C_{1,g}$ is self-dual.
With the notation in Eq.\eqref{eq C_a,b} and Eq.\eqref{eq * operator}
(or Eq.\eqref{eq * operator x}), we have:

\begin{lemma} \label{lem * and self-dual}
The $2$-quasi negacyclic code $C_{1,g}$ for $(1,g)\in\CR^2$
is self-dual if and only if $gg^*=-1$.
\end{lemma}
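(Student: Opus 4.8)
The plan is to first reduce self-duality to self-orthogonality by counting dimensions, and then to convert the orthogonality condition into the single algebraic equation $gg^*=-1$ via a formula that expresses the inner product through the operator ``$*$''.

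First I would observe that $C_{1,g}=\{(f,fg)\mid f\in\CR\}$ is the image of the injective $\CR$-linear map $\CR\to\CR^2$, $f\mapsto(f,fg)$ (injective because the first coordinate returns $f$). Hence $\dim_F C_{1,g}=\dim_F\CR=n$, which is exactly half of $\dim_F\CR^2=2n$. Since the inner product on $\CR^2$ is non-degenerate, $\dim_F C_{1,g}^\bot=2n-n=n=\dim_F C_{1,g}$, so $C_{1,g}=C_{1,g}^\bot$ holds if and only if $C_{1,g}\subseteq C_{1,g}^\bot$, that is, if and only if $C_{1,g}$ is self-orthogonal.

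The key step is a formula linking the inner product to ``$*$''. Let $\varepsilon\colon\CR\to F$ be the $F$-linear functional reading off the coefficient of $x^0$. I claim that $\langle a(x),b(x)\rangle=\varepsilon\big(a(x)\,b^*(x)\big)$ for all $a,b\in\CR$. Indeed, expanding $a\,b^*=\sum_{i,j}a_ib_j\,x^{i-j}$ and using $x^{-k}=-x^{\,n-k}$ (a consequence of $x^n=-1$), one checks that every term with $i\ne j$ lands in a basis vector $x^{\ell}$ with $1\le\ell\le n-1$, while the terms with $i=j$ sum to $\sum_i a_ib_i$; hence the constant term of $a\,b^*$ is exactly $\langle a,b\rangle$. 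With this in hand, self-orthogonality of $C_{1,g}$ reads $\langle f,h\rangle+\langle fg,hg\rangle=0$ for all $f,h\in\CR$; using that ``$*$'' is an algebra automorphism (Lemma~\ref{lem order of *}) and that $\CR$ is commutative, $\langle fg,hg\rangle=\varepsilon\big(fg\,(hg)^*\big)=\varepsilon\big(f h^*\,gg^*\big)$, so the condition becomes
\[
 \varepsilon\big(f h^*\,(1+gg^*)\big)=0 \qquad\text{for all } f,h\in\CR .
\]

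The ``if'' direction is now immediate: if $gg^*=-1$ then $1+gg^*=0$ and the identity holds trivially, so $C_{1,g}$ is self-orthogonal and hence self-dual. For the converse I would set $w=1+gg^*$ and assume $\varepsilon(fh^*w)=0$ for all $f,h$; taking $h=1$ (so $h^*=1$) gives $\varepsilon(fw)=0$ for every $f\in\CR$. Rewriting $\varepsilon(fw)=\varepsilon\big(f\,(w^*)^*\big)=\langle f,w^*\rangle$ and invoking non-degeneracy of the inner product forces $w^*=0$, whence $w=0$, i.e.\ $gg^*=-1$. I expect the main obstacle to be establishing the inner-product formula of the third paragraph cleanly—in particular keeping track of the signs coming from $x^n=-1$; once that identity is secured, the dimension count and the non-degeneracy argument are routine.
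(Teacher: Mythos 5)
Your proof is correct, and it takes a genuinely different route from the paper: the paper disposes of this lemma in one line by citing it as a special case of Lemma 4.9 of \cite{FL24} (the Galois self-dual $2$-quasi constacyclic setting), whereas you give a self-contained argument. Your two ingredients both check out: the dimension count ($C_{1,g}=\{(f,fg)\,|\,f\in\CR\}$ has $F$-dimension $n$, so by non-degeneracy of the standard form on $F^{2n}$ self-duality reduces to self-orthogonality), and the adjunction identity $\langle a,b\rangle=\varepsilon(ab^*)$, where $\varepsilon$ extracts the coefficient of $x^0$ — the sign bookkeeping is right, since for $i\ne j$ the monomial $x^{i-j}$ reduces, via $x^{-k}=-x^{n-k}$, to $\pm x^{\ell}$ with $1\le\ell\le n-1$ and so never contributes to the constant term. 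The converse direction is also sound: from $\varepsilon\bigl(fh^*(1+gg^*)\bigr)=0$ for all $f,h$, setting $h=1$ and using $w^{**}=w$ (the paper's Lemma~\ref{lem order of *}) together with non-degeneracy forces $1+gg^*=0$; alternatively you could take $f=x^{-k}$ to read off the coefficients of $w=1+gg^*$ directly, avoiding even the appeal to non-degeneracy on $\CR$. What your approach buys is transparency and self-containedness: it exhibits ``$*$'' as the adjoint involution of the standard bilinear form on $\CR$, which is the conceptual reason the criterion $gg^*=-1$ holds, while the paper's citation route inherits the statement from a more general result (arbitrary constacyclic shift constant $\lambda$ and Galois inner products) at the cost of opacity at this point in the text.
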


\begin{proof}
This is a special case of \cite[Lemma 4.9]{FL24}.
\end{proof}

\begin{remark} \label{rk D(A,*)} \rm
For the algebra $\CR$ with the operator ``$*$'' as in Eq.\eqref{eq * operator x},
we denote
\begin{align*}
\CD(\CR,*) = \big\{\, g\,\big|\, g\in \CR,\, gg^*=-1 \big\}.
\end{align*}
By Lemma~\ref{lem * and self-dual}, if $\CD(\CR,*)\neq\emptyset$
then for any  $g\in\CD(\CR,*)$ the $2$-quasi negacyclic code
$C_{1,g}$ over $F$ of length $2n$ is self-dual.
And the following hold.

{\bf(1)}
If $A$ is a {\em ``$*$''-stable} subalgebra of $\CR$
(hence $1_A=1_{\CR}$), i.e. $A^*=A$,
then the restriction of the operator ``$*$'' on $A$
is an algebra automorphism of $A$, and we denote
$ 
\CD(A,*) = \{ g\,|\, g\in A,\, gg^*=-1 \}.
$ 
Obviously, $\CD(A,*)\subseteq\CD(\CR,*)$
for any ``$*$''-stable subalgebra $A$ of $\CR$.

{\bf(2)}
If $\CR=\CR e\oplus\CR f$,
where $e^2=e$, $f^2=f$, $ef=0$ and $1=e+f$, then $\CR e$ is an algebra
with identity $e$ (but not a subalgebra of $\CR$ as $e\ne 1_{\CR}$).
It is the same for $\CR f$. Further, if $\CR e$ and $\CR f$ are
``$*$''-stable, then $\CD(\CR e,*)$ and $\CD(\CR f,*)$ make sense, e.g.,
$\CD(\CR e,*)= \{ g\,|\, g\in \CR e,\, gg^*=-e \}$; and
$$
 \CD(\CR, *)= \big\{\, g_e+g_f\,\big|\,
 g_e\in\CD(\CR e,*),\, g_f\in\CD(\CR f,*)\big \}.
$$
In particular, the cardinality
$|\CD(\CR, *)|=|\CD(\CR e, *)|\cdot|\CD(\CR f, *)|$.
\end{remark}

The following lemma shows a sufficient condition for the existence of self-dual
$2$-quasi negacyclic codes over $F$ of length $2n$.

\begin{lemma} \label{lem q D(F,*)}
If $q\,{\not\equiv}-\!1~({\rm mod}~4)$, then
$F=F\!\cdot\! 1_{\CR}$ is a ``$*$''-stable subalgebra of~$\CR$
and $|\CD(F,*)|=1$ or $2$, hence $|\CD(\CR,*)|\geq 1$.
\end{lemma}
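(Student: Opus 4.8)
The plan is to reduce the statement to counting the square roots of $-1$ in $F$. First I would verify the ``$*$''-stability of $F=F\cdot 1_{\CR}$: by Eq.\eqref{eq * operator} the operator ``$*$'' sends $a(X)=a_0+a_1X+\dots+a_{n-1}X^{n-1}$ to $a(X^{2n-1})\pmod{X^n+1}$, so any constant $a(X)=\alpha$ with $\alpha\in F$ satisfies $\alpha^*=\alpha$. Thus ``$*$'' fixes $F$ pointwise; in particular $F^*=F$, so $F$ is a ``$*$''-stable subalgebra and the restriction of ``$*$'' to $F$ is the identity automorphism.

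Next I would translate the defining condition of $\CD(F,*)$. For $g=\alpha\in F$ we have $g^*=\alpha$, hence $gg^*=\alpha^2$, and the requirement $gg^*=-1$ becomes $\alpha^2=-1$. Therefore
\[
 \CD(F,*)=\{\alpha\in F\mid \alpha^2=-1\},
\]
and it remains to count the solutions of $\alpha^2=-1$ in $F$ under the hypothesis $q\,{\not\equiv}-\!1\pmod{4}$.

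The key step is this count, which splits according to the characteristic. If $q$ is even, then $-1=1$ in $F$ and the Frobenius map $\alpha\mapsto\alpha^2$ is a bijection on $F$, so $\alpha^2=-1$ has the unique solution $\alpha=1$ and $|\CD(F,*)|=1$. If $q$ is odd, then $-1$ lies in the subgroup of squares $(F^\times)^2$, which consists of the elements whose order divides $(q-1)/2$, if and only if the order $2$ of $-1$ divides $(q-1)/2$, i.e.\ $q\equiv 1\pmod{4}$; in that case $\alpha^2=-1$ has exactly two solutions and $|\CD(F,*)|=2$, while for $q\equiv 3\pmod 4$ it has none. Since the hypothesis $q\,{\not\equiv}-\!1\pmod{4}$ excludes precisely the case $q\equiv 3\pmod 4$, in every remaining case $|\CD(F,*)|$ equals $1$ or $2$.

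Finally I would conclude. Because $\CD(F,*)\neq\emptyset$ and, by Remark~\ref{rk D(A,*)}(1), $\CD(F,*)\subseteq\CD(\CR,*)$ for the ``$*$''-stable subalgebra $F$, it follows that $|\CD(\CR,*)|\geq|\CD(F,*)|\geq 1$. I expect no serious obstacle here; the only point requiring care is the square-root count in the odd case, where one must invoke the standard criterion that $-1$ is a quadratic residue in $F$ exactly when $q\equiv 1\pmod 4$, together with the check that the even-characteristic case genuinely contributes a single solution rather than being vacuous.
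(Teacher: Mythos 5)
Your proof is correct and takes essentially the same route as the paper: both observe that ``$*$'' fixes $F\cdot 1_{\CR}$ pointwise, reduce $|\CD(F,*)|$ to counting solutions of $X^2=-1$ in $F$, split into the cases $q$ even (one solution) and $4\,|\,(q-1)$ (two solutions), and conclude via $\CD(F,*)\subseteq\CD(\CR,*)$. The only difference is cosmetic: you make explicit the standard quadratic-residue criterion for $-1\in(F^\times)^2$, which the paper leaves implicit.
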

\begin{proof}
The $F\!\cdot\! 1_\CR=\{\alpha\!\cdot\! 1_{\CR}
   \;|\; \alpha\in F\}$ is a subalgebra of $\CR$
   invariant by the automorphism $*$.
By Eq.\eqref{eq * operator},
the restriction of  ``$*$'' to $F\cdot 1_{\CR}$ is the identity automorphism of~$F$,
i.e., $g^*=g$ for any $g\in F\cdot 1_{\CR}$; hence $gg^*=g^2$.
It follows that
$|\CD(F,*)|$ equals the number of the solutions in~$F$ of the equation $X^2=-1$.
Note that $q\,{\not\equiv}-\!1~({\rm mod}~4)$ is equivalent to
that either $q$ is even or $4\,|\,(q-1)$.
If $q$ is even, the equation $X^2=-1$ has one solution in $F$.
If $4\,|\,(q-1)$, the equation $X^2=-1$ has two solutions in $F$.
Thus $|\CD(\CR,*)|\geq1$
since $\CD(F,*)\subseteq\CD(\CR,*)$.
\end{proof}

Next we calculate $|\CD({\cal A},*)|$,
where ${\cal A}=F[X]\big/\langle X^{2^m}+1\rangle$ with $m\geq 1$;
in this case, recall that
\begin{itemize}
\item
$(1+2\Z_{2^m})\big/\langle q\rangle_{\Z_{2^{m+1}}^\times}
=\Z_{2^{m+1}}^\times\big/\langle q\rangle_{\Z_{2^{m+1}}^\times}
=\{Q_1,\cdots,Q_t\}$, see Eq.\eqref{eq Z_2^m/q};
\item
$\CA={\cal A}e_1\oplus\dots\oplus{\cal A}e_t$,
see Eq.\eqref{eq A= Ae 1+...}.
\end{itemize}

\begin{lemma} \label{lem |D(A,*)|=}
Assume that $q$ is odd and
${\cal A}=F[X]\big/\langle X^{2^m}\!+1\rangle$
with $m\geq 1$. Then

{\bf(1)} If $q\equiv -1\pmod{2^{m+1}}$,
then $|\CD({\cal A},*)|=(q+1)^{2^{m-1}}$.

{\bf(2)} If $q\,{\not\equiv}-\!1 \pmod{2^{m+1}}$,
then $|\CD({\cal A},*)|=(q^{2^r}-1)^{\hat t}$,
where $2^r={\rm ord}_{\Z_{2^{m+1}}^\times}(q)$ as in Eq.\eqref{eq Q r=...}
and $\hat t=2^{m-r-1}$ as in Eq.\eqref{eq hat t=...}.
\end{lemma}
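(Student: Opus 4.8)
The plan is to reduce the computation to the individual blocks in the decompositions recorded in Remark~\ref{rk r hat t ...}, and then use the multiplicativity of $|\CD(\cdot,*)|$. Although Remark~\ref{rk D(A,*)}(2) is stated for a two-summand splitting, it extends at once by induction to any orthogonal decomposition into ``$*$''-stable pieces, so $|\CD(\CA,*)|$ equals the product of $|\CD(\cdot,*)|$ over the ``$*$''-stable summands. Thus in each case it suffices to count $\CD$ on one block and raise the count to the appropriate power.

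For part {\bf(1)}, with $q\equiv-1\pmod{2^{m+1}}$, Remark~\ref{rk r hat t ...}(1) tells us every $\CA e_i$ is ``$*$''-stable, while Lemma~\ref{lem -1 notin} gives ${\rm ord}(q)=2$, so $|Q_i|=2$, $\CA e_i\cong\Bbb F_{q^2}$, and $t=2^{m-1}$. The key point is to identify the restriction of ``$*$'' to $\CA e_i$: writing $Q_i=\{j_0,-j_0\}=\{j_0,qj_0\}$ (using $q\equiv-1$), the element $x$ corresponds to $\xi^{j_0}$ and ``$*$'' sends it to $\xi^{-j_0}=\xi^{qj_0}=(\xi^{j_0})^q$, so ``$*$'' is the Frobenius $z\mapsto z^q$ of $\Bbb F_{q^2}/F$. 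Hence $gg^*=g^{q+1}=N_{\Bbb F_{q^2}/F}(g)$, and $\CD(\CA e_i,*)$ is the fibre $N^{-1}(-1)$, which has exactly $q+1$ elements since the norm is surjective with kernel of cardinality $q+1$. Multiplying over the $t=2^{m-1}$ blocks yields $|\CD(\CA,*)|=(q+1)^{2^{m-1}}$.

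For part {\bf(2)}, with $q\not\equiv-1\pmod{2^{m+1}}$, Remark~\ref{rk r hat t ...}(2) groups the idempotents into $\hat t=2^{m-r-1}$ ``$*$''-swapped pairs, giving the ``$*$''-stable blocks $\CA\widehat e_i=\CA e_i\oplus\CA e_i^*$ with $\CA e_i\cong\Bbb F_{q^{2^r}}$ and ``$*$'' interchanging the two summands. I would write a general $g\in\CA\widehat e_i$ as $g=u+v$ with $u\in\CA e_i$, $v\in\CA e_i^*$; then $g^*=u^*+v^*$ with $u^*\in\CA e_i^*$, $v^*\in\CA e_i$, and the orthogonality $e_ie_i^*=0$ kills the diagonal products, leaving $gg^*=uv^*+vu^*$ with $uv^*\in\CA e_i$ and $vu^*\in\CA e_i^*$. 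So the condition $gg^*=-\widehat e_i$ splits into $uv^*=-e_i$ and $vu^*=-e_i^*$; applying ``$*$'' to the first equation yields the second automatically, since $(uv^*)^*=u^*v$ and $(-e_i)^*=-e_i^*$. It therefore suffices to solve $uv^*=-e_i$ in the field $\CA e_i\cong\Bbb F_{q^{2^r}}$. As ``$*$'' restricts to a bijection $\CA e_i^*\to\CA e_i$, the element $v$ is uniquely determined by $v^*$, and for each of the $q^{2^r}-1$ nonzero $u$ there is exactly one $v^*=-u^{-1}$ (with no solution when $u=0$). Hence $|\CD(\CA\widehat e_i,*)|=q^{2^r}-1$, and multiplying over the $\hat t$ blocks gives $(q^{2^r}-1)^{\hat t}$.

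The orthogonality computation and the norm count are routine; the one genuinely delicate point is the identification in part {\bf(1)} that ``$*$'' acts on each $\CA e_i\cong\Bbb F_{q^2}$ as the nontrivial ($q$-power Frobenius) automorphism rather than the identity. I would settle this by the explicit $\xi^{j_0}\mapsto\xi^{-j_0}$ computation above, observing that $\xi^{-j_0}\ne\xi^{j_0}$ because $j_0$ is odd and $2j_0\not\equiv0\pmod{2^{m+1}}$ for $m\ge1$; equivalently, $-Q_i=Q_i$ forces the nontrivial automorphism of the quadratic extension.
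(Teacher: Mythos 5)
Your proposal is correct and follows essentially the same route as the paper: decompose $\CA$ into ``$*$''-stable blocks, identify ``$*$'' as the Frobenius on each quadratic field $\CA e_i$ in case (1), and in case (2) split $g=u+v$ over the swapped pair $\CA e_i\oplus\CA e_i^*$ so that $v$ is uniquely determined by each invertible $u$. The only cosmetic difference is in case (1), where you count the fibre $N^{-1}(-1)$ of the norm map, while the paper counts the roots of $X^{q+1}+1$ via the cyclic subgroup of order $2(q+1)$ in $F_{q^2}^\times$ --- two equivalent counts of the same $q+1$ elements.
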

\begin{proof}
{\bf (1)}. By Lemma~\ref{lem -1 notin},
 $q\equiv -1\pmod{2^{m+1}}$ implies that
$\langle q\rangle_{\Z_{2^{m+1}}^\times}=\{\pm 1\}$.
Thus, for $i=1,\dots, t$, 
$Q_i=\{ j, -j\}=Q_{i^*}$ and
$\psi_i(X)=(X-\xi^j)(X-\xi^{-j})$ 
is irreducible over~$F$.
Note that $e_i^*=e_i$ and $\CA e_i$ is ``$*$''-stable, cf. Remark \ref{rk r hat t ...}(1).
%
By Eq.\eqref{eq R= Re_1+...},
$$
 {\cal A}e_i\cong F[X]\big/\langle \psi_i(X)\rangle
  \mathop{\longrightarrow}\limits^{\cong}F(\xi^j), ~~~~
   f(X)\longmapsto f(\xi^j),
$$
where $F(\xi^j)$ is the field extension generated by $\xi^j$ over~$F$.
So $\CA e_i\cong F_{q^2}$ is the field with cardinality $q^2$.
Because $x^*=x^{-1}$ (see Eq.\eqref{eq * operator x}),
the restriction of ``$*$'' to ${\cal A}e_i$ induces
the automorphism of $F(\xi^j)$ mapping $\xi^j$ to $\xi^{-j}$.
In a word, the restriction of $*$ to ${\cal A}e_i$ induces an automorphism
of ${\cal A}e_i$ of order~$2$. Thus
$$
 a^*=a^q, ~~~~~\forall\, a\in{\cal A}e_i.
$$
Then for $g_i\in\CA e_i$,
$$
 g_ig_i^*\!=-e_i ~\iff ~ g_i^{1+q}=-e_i ~ \iff ~
 \mbox{$g_i$ is a root of the polynomial $X^{q+1}+1$.}
$$
The multiplication group $F_{q^2}^\times$ is a cyclic group of order
$$
 |F_{q^2}^\times|=q^2-1=(q+1)(q-1).
$$
And $2\,|\,(q-1)$ (since $q$ is odd). So $F_{q^2}^\times$ has a
subgroup $L$ of order $2(q+1)$. The elements of $L$ are exactly
all roots of the polynomial $X^{2(q+1)}-1$. And
$$
X^{2(q+1)}-1=(X^{q+1}+1)(X^{q+1}-1)
$$
Thus all roots of the polynomial $X^{q+1}+1$ are inside $F_{q^2}$.
In conclusion,
$$
 |\CD({\cal A}e_i,*)|=q+1, ~~~~~~ i=1,\dots, t.
$$
Since $\dim_F{\cal A}e_i=2$, the number $t=2^m/2=2^{m-1}$.
Then by Remark~\ref{rk D(A,*)}(2),
$$
 |\CD({\cal A},*)|=\prod_{i=1}^t|\CD({\cal A}e_i,*)|
=(q+1)^t=(q+1)^{2^{m-1}}.
$$

{\bf (2)}.
By Lemma~\ref{lem -1 notin} again,
 $q~{\not\equiv}-\!1\pmod{2^{m+1}}$ implies that
$-1\notin\langle q\rangle_{\Z_{2^{m+1}}^\times}$;
in this case for $i=1,\dots, t$, the $Q_i\neq Q_{i^*}$
and $e_i^*\neq e_i$ (see Lemma~\ref{lem -1 in q}(2)).
From Eq.\eqref{eq A=A widehat e} we have that
\begin{align*}
{\cal A}={\cal A}\widehat e_1
 \oplus\dots\oplus {\cal A}\widehat e_{\hat t} ,
 ~~~~ \mbox{where} ~ {\cal A}\widehat e_i={\cal A}e_i\oplus{\cal A}e_i^* , ~
 i=1,\dots, \hat t.
\end{align*}
For $g_i=g_i'+g_i''\in\CA\widehat e_i$,
where $g_i'\in\CA e_i$ and $g_i''\in\CA e_i^*$,
we have $g_i^*=g_i''^*+g_i'^*$ with $g_i''^*\in\CA e_i$ and $g_i'^*\in\CA e_i^*$.
Then $g_ig_i^*=(g_i'+g_i'')(g_i''^*+g_i'^*)=g_i'g_i''^*+g_i'^*g_i''$.
So, $g_ig_i^*=-\widehat e_i=-e_i-e_i^*$ if and only if
$g_i'g_i''^*=-e_i$ and $g_i'^*g_i''=-e_i^*$.
We take any $g_i'\in(\CA e_i)^\times$, then $g_i''^*=-g_i'^{-1}$
(where $g_i'^{-1}$ is the inverse of $g_i'$ in $\CA e_i$, not in $\CA$),
hence $g_i''=g_i''^{**}=-(g_i'^*)^{-1}$
is uniquely determined. In a word,
\begin{align} \label{eq aa^*=...}
 g_ig_i^*=-\widehat e_i ~\iff~
 g_i = g_i' -(g_i'^*)^{-1}~~ \mbox{for a}~ g_i'\in(\CA e_i)^\times.
\end{align}
 Since $\dim_{F}\CA e_i=|Q_i|={\rm ord}_{\Z_{2^{m+1}}^\times}(q)=2^r$
(cf. Eq.\eqref{eq Q r=...}),
$\CA e_i$ is a field with cardinality $|\CA e_i|=q^{2^r}$. We get that
\begin{align} \label{eq D A widehat e_i}
 |\CD(\CA\widehat e_i,*)|=|(\CA e_i)^\times|=q^{2^r}-1.
\end{align}
By Remark~\ref{rk D(A,*)}(2),
$$
|\CD({\cal A},*)|=\prod_{i=1}^{\hat t}|\CD({\cal A}\widehat e_i,*)|
=\prod_{i=1}^{\hat t}|(\CA e_i)^\times|
=(q^{2^r}-1)^{\hat t}.
$$
We are done.
\end{proof}

We now prove the following theorem to complete
a proof of Theorem~\ref{int existence}.

\begin{theorem} \label{thm existence even}
If $n$ is even, then $q$-ary self-dual $2$-quasi negacyclic codes
of length $2n$ always exist.
\end{theorem}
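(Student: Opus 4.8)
The plan is to prove existence by exhibiting a nonempty $\CD(\CR,*)=\{g\in\CR\mid gg^*=-1\}$: once we have some $g\in\CD(\CR,*)$, Lemma~\ref{lem * and self-dual} says $C_{1,g}$ is a self-dual $2$-quasi negacyclic code of length $2n$, which is exactly what we want. So the whole task reduces to showing $\CD(\CR,*)\ne\emptyset$ under the sole hypothesis that $n$ is even.

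First I would dispose of the case that $q$ is even. Then $q\,{\not\equiv}-1\pmod 4$, so Lemma~\ref{lem q D(F,*)} already gives $|\CD(\CR,*)|\ge 1$ (this reflects the fact that in characteristic $2$ the $2$-quasi negacyclic codes coincide with the $2$-quasi cyclic codes). Hence I may assume from now on that $q$ is odd. The substantive point is thus the case $q\,{\equiv}-1\pmod4$, where Lemma~\ref{lem q D(F,*)} no longer applies and the evenness of $n$ must be used.

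Since $n$ is even, write $n=2^m n_0$ with $m\ge 1$ and $n_0$ odd. The element $x\in\CR$ has multiplicative order $2n=2^{m+1}n_0$, so $y:=x^{n_0}$ has order $2^{m+1}$ and satisfies $y^{2^m}=x^{n}=-1$. I would form the subalgebra $A:=F[y]=F[x^{n_0}]\subseteq\CR$ and claim that, writing $\CA=F[Y]\big/\langle Y^{2^m}+1\rangle$ (the variable of the algebra of Lemma~\ref{lem |D(A,*)|=} renamed to $Y$), the assignment $Y\mapsto y$ extends to an algebra isomorphism $\theta\colon\CA\to A$. Surjectivity is clear; injectivity is a degree count: if $f(y)=0$ with $\deg f<2^m$, then $X^n+1\mid f(X^{n_0})$ while $\deg f(X^{n_0})\le(2^m-1)n_0<2^m n_0=n$, forcing $f=0$. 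Thus $\dim_F A=2^m$. Next I would check that $A$ is ``$*$''-stable and that $\theta$ intertwines the two operators ``$*$'': indeed $y^*=(x^{n_0})^*=(x^{-1})^{n_0}=y^{-1}$, and since $y$ has finite order $y^{-1}$ is a power of $y$, so $y^*\in A$; as $A$ is generated by $y$ it is ``$*$''-stable, and since $\theta$ is an algebra isomorphism with $\theta(Y)=y$ and $\theta(Y)^*=y^{-1}=\theta(Y^{-1})$ we obtain $\theta\circ *_{\CA}=*_{\CR}\circ\theta$. Consequently $\theta$ carries $\CD(\CA,*)$ bijectively into $\CD(A,*)$, and $\CD(A,*)\subseteq\CD(\CR,*)$ by Remark~\ref{rk D(A,*)}(1).

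Finally, since $q$ is odd and $m\ge 1$, Lemma~\ref{lem |D(A,*)|=} computes $|\CD(\CA,*)|$: it equals $(q+1)^{2^{m-1}}$ when $q\,{\equiv}-1\pmod{2^{m+1}}$, and $(q^{2^r}-1)^{\hat t}$ otherwise. As $q\ge 3$ and both exponents $2^{m-1}$ and $\hat t=2^{m-r-1}$ are at least $1$, each of these values is strictly positive; hence $\CD(\CA,*)\ne\emptyset$, and therefore $\CD(\CR,*)\ne\emptyset$, completing the proof. The one delicate step I expect to require care is the middle one: verifying that the copy of $\CA$ sitting inside $\CR$ as $F[x^{n_0}]$ is genuinely ``$*$''-stable and that the restriction of ``$*$'' to it is precisely the operator ``$*$'' analysed in Lemma~\ref{lem |D(A,*)|=}; everything after that is bookkeeping together with an appeal to the cardinality formula.
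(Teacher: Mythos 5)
Your proposal is correct and follows essentially the same route as the paper: reduce to showing $\CD(\CR,*)\ne\emptyset$, handle even $q$ via Lemma~\ref{lem q D(F,*)}, and for odd $q$ embed $\CA=F[X]/\langle X^{2^m}+1\rangle$ into $\CR$ as the ``$*$''-stable subalgebra generated by $y=x^{n_0}$ (the paper writes it as $F\oplus Fy\oplus\dots\oplus Fy^{2^m-1}$), then invoke the positivity of $|\CD(\CA,*)|$ from Lemma~\ref{lem |D(A,*)|=}. Your explicit degree-count argument for injectivity of $\theta$ and your verification that $\theta$ intertwines the two ``$*$'' operators are slightly more careful than the paper's brief justification, but the underlying proof is identical.
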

\begin{proof}

Let $\CR=F[X]\big/\langle X^n+1\rangle$.
If $q$ is even, then $\CD(\CR,*)\ne\emptyset$ (see Lemma~\ref{lem q D(F,*)});
hence the self-dual $2$-quasi negacyclic codes exist, cf. Remark~\ref{rk D(A,*)}.
In the following we assume that $q$ is odd, and
let $x=X$ be the invertible element of $\CR$
as in  Eq.\eqref{eq R=F+Fx+...} and Eq.\eqref{eq * operator x}, i.e.,
$\CR=F\oplus Fx\oplus\dots\oplus Fx^{n-1}$ with relation $x^{n}=-1$,
and $a^*(x)=a(x^{-1})$ for all $a(x)\in\CR$.


Let $n=2^m n'$ with $n'$ being odd. Since $n$ is even, $m\ge 1$.
Set $y=x^{n'}$. The following $2^m$ elements
$$
 1, \; y=x^{n'},\; y^2=x^{2n'}, \; y^3=x^{3n'}, ~
  \dots, ~ y^{2^m-1}=x^{2^m n' -n'},
$$
form a subset of $\{1,x,x^2,\dots,x^{n-1}\}$, hence they are
linearly independent; and $y^{2^m}=x^{2^m n'}=-1$.
Then in $\CR$ the following $2^m$-dimensional subspace
$$
\CA=F\oplus Fy\oplus Fy^2\oplus\dots\dots\oplus F y^{2^m-1}
$$
is multiplication closed. So the above $\CA$ is an algebra generated by $y$
with the relation $y^{2^m}=-1$.
And, for $a(y)=\sum_{i=0}^{2^m-1}a_iy^i\in\CA$,
$$
 a^*(y)=a^*(x^{n'})=a\big((x^{-1})^{n'}\big)
 =a\big((x^{n'})^{-1}\big)=a(y^{-1})
 \in\CA.
$$
So we get the following.
\begin{itemize}
\item
${\cal A}$ is a ``$*$''-stable subalgebra of $\CR$
(note that $1_{\CA}=1_F=1_{\CR}$);
\item
${\cal A}\cong F[X]\big/\langle X^{2^m}+1\rangle$, ~ $m\ge 1$.
\end{itemize}
By Lemma~\ref{lem |D(A,*)|=}, $|\CD({\cal A},*)|>0$ for any odd $q$.
 Since $\CD({\cal A},*)\subseteq\CD(\CR,*)$, we have $|\CD(\CR,*)|>0$,
and so the $q$-ary self-dual $2$-quasi negacyclic codes of length $2n$ exist.
\end{proof}

\section{Self-dual $2$-quasi negacyclic codes are good} \label{asymptotic}

In this section we prove Theorem~\ref{int asymptotic}.
If $q$ is even, then ``negacyclic'' is just ``cyclic''; and
the asymptotically good sequence of self-dual $2$-quasi cyclic codes
$C_1,C_2,\dots$ stated as in Theorem \ref{int asymptotic}~ has been constructed
in \cite[Theorem~IV.17]{LF22}.

In the following, we always assume that
\begin{itemize}
\item
$q$ is odd and $n=2^m$, $m\ge \mu_q$,
where $\mu_q$ is defined in Eq.\eqref{eq m<mu_q},
and in this case $q~{\not\equiv}-\!1\pmod{2^{m+1}}$;
\item
$\delta$ is a real number satisfying
$\delta\in(0,1-q^{-1})$ and $h_q(\delta)<\frac{1}{4}$,
where $h_q(\delta)$ is the $q$-entropy function defined in Eq.\eqref{eq def h_q}.
\end{itemize}
With the above assumptions, the (2) and (3) of
 Remark~\ref{rk r hat t ...} are applied to~$\CA=F[X]\big/\langle X^{2^m}+1\rangle$;
in particular,
\begin{align}\label{eq hat t= r=...}
  {\rm ord}_{\Z_{2^{m+1}}^\times}(q)=2^r ~ \mbox{ with }~
  r=m-\mu_q+1, ~~~\mbox{ and }~ \hat t=2^{\mu_q-2},
\end{align}
see Eq.\eqref{eq r=...},
where $\hat t$ is the number of the minimal ``$*$''-stable ideals of $\CA$
as shown in Eq.\eqref{eq A=A widehat e}:
$$
 \CA=\CA \widehat e_1\oplus \; \cdots \; \oplus \CA \widehat e_{\hat t},
 ~~~~~~~ \widehat e_i=e_i+ e_i^*, ~~ i=1,\dots,\hat t.
$$
For $i=1,\dots,\hat t$, $\CA\widehat e_i$ is a ``$*$''-stable algebra with identity $\widehat e_i$
(it is not a subalgebra of $\CA$ in general because $\widehat e_i\ne 1_{\CA}$ in general)
and $\dim_F \CA\widehat e_i= 2^{r+1}$,
then $\CD(\CA\widehat e_i,*)=\{ g_i\in\CA\widehat e_i\,|\,g_ig_i^*=-\widehat e_i\}$
 makes sense.
 By Eq.\eqref{eq ab=a_1b_1+...}
and Remark~\ref{rk D(A,*)}, we have that
\begin{align} \label{eq D(A,*)=...}
\CD(\CA,*)=
 \big\{\,g_1+\dots+g_{\hat t}\;\big|\; g_i\in\CD(\CA\widehat e_i,*),\,
   i=1,\dots,\hat t\,\big\}.
\end{align}

For any $a\in\CA$, we denote
${\rm supp}(a)=\big\{ i\,\big|\,1\le i\le \hat t,\,a_i=a\widehat e_i\ne 0\big\}$, and
\begin{align} \label{eq supp of a}
\textstyle
s_a\!=|{\rm supp}(a)|, ~~~~~~
\CA(a)=\bigoplus\limits_{i\in{\rm supp}(a)} \CA\widehat e_i.
\end{align}

\begin{lemma} \label{lem supp of a}
Assume that $m\geq \mu_q$,
where $\mu_q$ is defined in Eq.\eqref {eq m<mu_q}.
For $(a,b)\in\CA^2=\CA\times \CA$, denote
$$
\CD(\CA,*)_{(a,b)}=
\{\,g\;|\; g\in\CD(\CA,*),\,(a,b)\in C_{1,g} \,\}.
$$
If $\CD(\CA,*)_{(a,b)}\ne\emptyset$,
then ${\rm supp}(a)={\rm supp}(b)$
(equivalently, $\CA(a)=\CA(b)$) and
\begin{align*}
\big|\CD(\CA,*)_{(a,b)}\big|\le
q^{2^r (\hat t-s_a)}.
\end{align*}
\end{lemma}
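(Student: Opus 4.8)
The plan is to analyze when the generator pair $(a,b)$ lands in some self-dual code $C_{1,g}$, and to count the number of admissible $g$. First I would exploit the decomposition $\CA=\CA\widehat e_1\oplus\dots\oplus\CA\widehat e_{\hat t}$ and the componentwise multiplication rule Eq.\eqref{eq ab=a_1b_1+...} to reduce everything to each ``$*$''-stable block $\CA\widehat e_i$. Writing $a=\sum_i a_i$, $b=\sum_i b_i$, $g=\sum_i g_i$ with $a_i=a\widehat e_i$ etc., the condition $(a,b)\in C_{1,g}$ means there is some $f\in\CA$ with $(a,b)=f\cdot(1,g)$, i.e.\ $a=f$ and $b=fg$; componentwise this reads $a_i=f_i$ and $b_i=a_ig_i$ for each $i$. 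So the global membership splits into $\hat t$ independent block conditions $b_i=a_ig_i$.

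\textbf{Matching supports.} The key structural step is to show ${\rm supp}(a)={\rm supp}(b)$. Fix any $g\in\CD(\CA,*)_{(a,b)}$; then $g_i\in\CD(\CA\widehat e_i,*)$, so by Eq.\eqref{eq aa^*=...} each $g_i$ is invertible in the algebra $\CA\widehat e_i$ (it equals $g_i'-(g_i'^*)^{-1}$ for a unit $g_i'$, hence is a unit). From $b_i=a_ig_i$ with $g_i$ invertible we get $a_i=0\iff b_i=0$, which is exactly ${\rm supp}(a)={\rm supp}(b)$, equivalently $\CA(a)=\CA(b)$ by Eq.\eqref{eq supp of a}.

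\textbf{Counting the admissible $g$.} Now I would count $\CD(\CA,*)_{(a,b)}$ by counting the possibilities block by block, using Eq.\eqref{eq D(A,*)=...} which says $g$ ranges over tuples $(g_1,\dots,g_{\hat t})$ with $g_i\in\CD(\CA\widehat e_i,*)$. For an index $i\in{\rm supp}(a)$ the block element $a_i$ is a nonzero element of the field $\CA\widehat e_i$ (more precisely $a_i$ lies in $\CA e_i\oplus\CA e_i^*$, and I would check that the constraint $b_i=a_ig_i$ together with $g_ig_i^*=-\widehat e_i$ pins $g_i$ down essentially uniquely once $a_i,b_i$ are given, contributing a bounded factor). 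For an index $i\notin{\rm supp}(a)$ we have $a_i=b_i=0$, so the equation $b_i=a_ig_i$ imposes \emph{no} constraint on $g_i$, and $g_i$ is free to range over all of $\CD(\CA\widehat e_i,*)$, whose size is $q^{2^r}-1<q^{2^r}$ by Eq.\eqref{eq D A widehat e_i}. Since there are $\hat t-s_a$ such free indices, the total count is bounded by $\prod_{i\notin{\rm supp}(a)}|\CD(\CA\widehat e_i,*)|\le (q^{2^r})^{\hat t-s_a}=q^{2^r(\hat t-s_a)}$, which is the claimed bound.

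\textbf{Main obstacle.} The step I expect to require the most care is verifying that, on each block $i\in{\rm supp}(a)$, the admissible $g_i$ is uniquely (or at most boundedly) determined by $(a_i,b_i)$, so that those blocks contribute a factor of $1$ to the count rather than inflating the bound. The cleanest route is to observe that $g_i$ must satisfy both $b_i=a_ig_i$ (with $a_i$ a unit in the field $\CA\widehat e_i$, giving $g_i=a_i^{-1}b_i$ uniquely) and $g_ig_i^*=-\widehat e_i$; the first equation already determines $g_i$ completely on those blocks, so the only freedom is on the complementary $\hat t-s_a$ blocks. I would double-check the subtlety that $a_i$ need not be a unit in $\CA\widehat e_i$ when $\CA\widehat e_i=\CA e_i\oplus\CA e_i^*$ is a product of two fields; but the compatibility with $g_i\in\CD(\CA\widehat e_i,*)$ forces $a_i$ to be a unit on the support, and it is precisely this reconciliation between the support condition and the invertibility of $g_i$ that deserves explicit justification.
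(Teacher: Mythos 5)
Your overall architecture coincides with the paper's: split via $\CA=\CA\widehat e_1\oplus\dots\oplus\CA\widehat e_{\hat t}$, reduce $(a,b)\in C_{1,g}$ to the block equations $b_i=a_ig_i$, deduce ${\rm supp}(a)={\rm supp}(b)$ from the invertibility of $g_i$ in $\CA\widehat e_i$, and let $g_i$ run freely over $\CD(\CA\widehat e_i,*)$ on the $\hat t-s_a$ blocks off the support. The support-matching step and the off-support count $(q^{2^r}-1)^{\hat t-s_a}\le q^{2^r(\hat t-s_a)}$ are correct. The genuine gap is your treatment of the blocks $i\in{\rm supp}(a)$. Your ``cleanest route'' sets $g_i=a_i^{-1}b_i$ ``with $a_i$ a unit in the field $\CA\widehat e_i$'', but $\CA\widehat e_i=\CA e_i\oplus\CA e_i^*$ is not a field, and $i\in{\rm supp}(a)$ only gives $a_i\ne 0$: writing $a_i=a_i'+a_i''$ with $a_i'\in\CA e_i$, $a_i''\in\CA e_i^*$, the element $a_i$ is a unit only when both components are nonzero. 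Your proposed patch --- that compatibility with $g_i\in\CD(\CA\widehat e_i,*)$ ``forces $a_i$ to be a unit on the support'' --- is false: take $a=e_1$ and $b=ag$ for any $g\in\CD(\CA,*)$; then $(a,b)\in C_{1,g}$ and $1\in{\rm supp}(a)$, yet $a_1=e_1$ is a zero divisor in $\CA\widehat e_1$. Note also that your hedge ``pins $g_i$ down essentially uniquely\,\dots\,contributing a bounded factor'' would not suffice: when $s_a=\hat t$ the asserted bound is $q^0=1$, so you need \emph{at most one} admissible $g_i$ on every support block, not merely a bounded number.

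The repair is exactly the paper's use of the parametrization Eq.\eqref{eq aa^*=...}, which you invoke for invertibility but never deploy in the count: every $g_i\in\CD(\CA\widehat e_i,*)$ has the form $g_i=g_i'-(g_i'^*)^{-1}$ with $g_i'\in(\CA e_i)^\times$, so the component $g_i''$ is a function of $g_i'$. Since $\CA e_i\cdot\CA e_i^*=0$, the block equation $b_i=a_ig_i$ splits as $b_i'=a_i'g_i'$ and $b_i''=a_i''g_i''$. If $a_i'\ne 0$, then $g_i'=a_i'^{-1}b_i'$ is uniquely determined in the field $\CA e_i$, and then $g_i''=-(g_i'^*)^{-1}$ is forced by self-duality regardless of whether $a_i''$ vanishes; symmetrically if $a_i''\ne 0$. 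Hence each support block admits at most one $g_i$, with no unit hypothesis on $a_i$, and the bound $\big|\CD(\CA,*)_{(a,b)}\big|\le\prod_{i\notin{\rm supp}(a)}\big|\CD(\CA\widehat e_i,*)\big|=(q^{2^r}-1)^{\hat t-s_a}\le q^{2^r(\hat t-s_a)}$ follows as you intended.
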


\begin{proof}
Assume that $g\in\CD(\CA,*)_{(a,b)}$.
Then $(a,b)\in C_{1,g}$, i.e., there is an $u\in\CA$ such that
$(a,b)=u(1,g)$. So $a=u$  and $b=ug=ag$. By Eq.\eqref{eq ab=a_1b_1+...},
$$
  b_i=a_ig_i, ~~~~~~ i=1,\dots,\hat t.
$$
Note that $g_i\in(\CA\widehat e_i)^\times$
(since $g_ig_i^*=-\widehat e_i$),
we see that $b_i\ne 0$ if and only if $a_i\ne 0$.
So ${\rm supp}(a)={\rm supp}(b)$.

For $a=a_1+\dots+a_{\hat t}$ with $a_i\in\CA\widehat e_i=\CA e_i\oplus\CA e_i^*$,
we write $a_i=a_i'+a_i''$ with $a_i'\in\CA e_i$ and $a_i''\in\CA e_i^*$.
It is the same for $b$ and $g$.

Assume that $i\in{\rm supp}(a)$, then $a_i =a_i'+a_i''\ne 0$, hence
one of $a_i'$ and $a_i''$ is nonzero. Suppose that $a_i'\ne 0$,
there is at most one $g_i'\in(\CA e_i)^{\times}$ such that $b_i'=a_i'g_i'$.
By Eq.\eqref{eq aa^*=...}, $g_i''=-(g_i'^{*})^{-1}$ is uniquely determined by $g_i'$.
So there is at most one $g_i\in\CD(\CA\widehat e_i,*)$
such that $b_i=a_ig_i$.

Next assume that $i\notin{\rm supp}(a)$, then $a_i=b_i=0$, hence
any $g_i\in\CD(\CA\widehat e_i,*)$ satisfies that $b_i=a_ig_i$.

By Eq.\eqref{eq D(A,*)=...},
from the above two paragraphs we see that
\begin{align*}
\big|\CD(\CA,*)_{(a,b)}\big|
\leq\prod_{i\notin {\rm supp}(a)}\big|\CD(\CA\widehat e_i,*)\big|
=\big|\CD(\CA,*)\big| \Big/ \!
\prod_{i\in {\rm supp}(a)}\big|\CD(\CA\widehat e_i,*)\big|.
\end{align*}
By Eq.\eqref{eq D A widehat e_i}, 
$$
\prod_{i\in {\rm supp}(a)}\big|\CD(\CA\widehat e_i,*)\big|
=\prod_{i\in {\rm supp}(a)}(q^{2^r}-1)
=(q^{2^r}-1)^{s_a}
$$
Note that $\big|\CD(\CA,*)\big|=(q^{2^r}-1)^{\hat t}$,
see Lemma~\ref{lem |D(A,*)|=}(2).  
Hence
\begin{align*}
 \big|\CD(\CA,*)_{(a,b)}\big|
\leq (q^{2^r}-1)^{\hat t}\big/(q^{2^r}-1)^{s_a}
 = (q^{2^r}-1)^{\hat t-s_a}
\leq (q^{2^r})^{\hat t-s_a}.
\end{align*}
That is,
$\big|\CD(\CA,*)_{(a,b)}\big|\le q^{2^r(\hat t-s_a)}$.
\end{proof}

For $1\le s\le \hat t$, set $\Omega_s$ to be the set of the ``$*$''-stable
ideals of $\CA$ which is a direct sum of $s$ minimal ``$*$''-stable ideals
(i.e., the ``$*$''-stable ideals of dimension $s\cdot 2^{r+1}$),
cf.~Eq.\eqref{eq A=A widehat e}.
Precisely, for $A\in\Omega_s$, there is a unique subset
$\{i_1,\,\dots,\,i_s\}\subseteq\{1,\,\dots,\, \hat t\}$
with $1\le i_1<\dots<i_s\le \hat t$, such that
$A=\CA\widehat e_{i_1}\oplus\dots\oplus\CA\widehat e_{i_s}$.
Let $A\in\Omega_s$. For any $a\in A$, $s_a\le s$ obviously.
We set $\tilde A=\{a\,|\,a\in A,\, s_a=s\}$, and denote
(for $(A\!\times\! A)^{\le\delta}$, cf. Lemma~\ref{lem balance})
\begin{align} \label{eq tilde A}
\textstyle
 (\tilde A\!\times\!\tilde A)^{\le\delta}
 =(\tilde A\!\times\!\tilde A) {\bigcap} (A\!\times\! A)^{\le\delta}
 =\big\{(a,b)\,\big|\, (a,b)\!\in\! \tilde A\!\times\!\tilde A,\,
 \frac{{\rm w}(a,b)}{2\cdot 2^m}\le \delta \big\}.
\end{align}

\begin{lemma} \label{lem D <= delta}
Assume that $m\geq\mu_q$, and that
$\delta\in(0,1-q^{-1})$ and $h_q(\delta)<\frac{1}{4}$.
Denote
$$
 \CD(\CA,*)^{\le\delta}=
  \big\{\, g\,\big|\,g\in\CD(\CA,*),\,\Delta(C_{1,g})\leq\delta\big\}.
$$
Let $\Omega_s$ and $ (\tilde A\times\tilde A)^{\le\delta}$ defined as above.
Then
$$
\CD(\CA,*)^{\le\delta} \;\subseteq\;
\bigcup_{s=1}^{\hat t}\bigcup_{A\in\Omega_{s}}
\bigcup_{(a,b)\in (\tilde A\times\tilde A)^{\le\delta}}  \CD(\CA,*)_{(a,b)}.
$$
\end{lemma}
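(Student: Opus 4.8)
The plan is to show that every $g \in \CD(\CA,*)^{\le\delta}$ is captured by the triple union on the right-hand side, by unpacking what membership on the left means and locating the appropriate $s$, $A$, and $(a,b)$. So fix an arbitrary $g \in \CD(\CA,*)^{\le\delta}$. By definition this means $g \in \CD(\CA,*)$ (so $gg^*=-1$ and the code $C_{1,g}$ is self-dual by Lemma~\ref{lem * and self-dual}) and $\Delta(C_{1,g}) \le \delta$. The goal is to produce a codeword that witnesses the small relative minimum weight and then read off the data $(s,A,(a,b))$ indexing the union that contains $g$.

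\textbf{Locating the witness codeword.} Since $\Delta(C_{1,g}) = \frac{\w(C_{1,g})}{2\cdot 2^m} \le \delta$, there exists a nonzero minimum-weight codeword $(a,b) \in C_{1,g}$ with $\frac{\w(a,b)}{2\cdot 2^m} \le \delta$. By construction $g \in \CD(\CA,*)_{(a,b)}$, since $g \in \CD(\CA,*)$ and $(a,b) \in C_{1,g}$. By Lemma~\ref{lem supp of a}, the nonemptiness of $\CD(\CA,*)_{(a,b)}$ forces $\supp(a) = \supp(b)$; I would set $s = s_a = s_b$ and let $A = \CA(a) = \CA(b) = \bigoplus_{i \in \supp(a)} \CA\widehat e_i$. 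Then $A \in \Omega_s$ by the definition of $\Omega_s$ as the set of $*$-stable ideals that are direct sums of $s$ minimal $*$-stable ideals, indexed by the subset $\supp(a) \subseteq \{1,\dots,\hat t\}$.

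\textbf{Checking the membership conditions for the union.} I must verify that $(a,b)$ actually lies in $(\tilde A \times \tilde A)^{\le\delta}$. First, since $A = \CA(a)$ is spanned exactly by the blocks where $a$ is nonzero, we have $s_a = s$, so $a \in \tilde A$; likewise $b \in \tilde A$ because $\supp(b) = \supp(a)$ has size $s$. Second, $(a,b) \in A \times A$ and $\frac{\w(a,b)}{2\cdot 2^m} \le \delta$, so $(a,b) \in (A \times A)^{\le\delta}$; combining these, $(a,b) \in (\tilde A \times \tilde A)^{\le\delta}$ by the definition in Eq.\eqref{eq tilde A}. Since $1 \le s \le \hat t$, the triple $(s, A, (a,b))$ indexes a legitimate term of the union, and $g \in \CD(\CA,*)_{(a,b)}$ places $g$ inside that term. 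As $g$ was arbitrary, the claimed inclusion follows.

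\textbf{The main subtlety} is not any single computation but ensuring that the witness $(a,b)$ is genuinely nonzero so that the support, and hence $s \ge 1$, is well-defined: the minimum weight is taken over nonzero codewords, so the self-duality of $C_{1,g}$ (which guarantees $C_{1,g} \neq 0$, having dimension equal to $2^m$) is what rules out the degenerate case. One small point worth stating carefully is that the argument only needs the \emph{existence} of one small-weight codeword, not a canonical choice; the union is over \emph{all} $(a,b) \in (\tilde A \times \tilde A)^{\le\delta}$, so any valid witness suffices to place $g$ in the union. The whole argument is essentially bookkeeping that threads together Lemma~\ref{lem supp of a} (which supplies $\supp(a)=\supp(b)$) with the defining data of $\Omega_s$ and $(\tilde A \times \tilde A)^{\le\delta}$.
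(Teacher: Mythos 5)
Your proof is correct and takes essentially the same route as the paper's: pick a codeword $(a,b)\in C_{1,g}$ of relative weight at most $\delta$, note $g\in\CD(\CA,*)_{(a,b)}$, and invoke Lemma~\ref{lem supp of a} to get ${\rm supp}(a)={\rm supp}(b)$, so that $(a,b)\in(\tilde A\times\tilde A)^{\le\delta}$ with $A=\CA(a)\in\Omega_{s_a}$. Your added observation that the witness codeword is nonzero (hence $s_a\ge 1$) is a detail the paper leaves implicit, but it does not alter the argument.
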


\begin{proof}
Let $g\in \CD(\CA,*)^{\le\delta}$, there is an $(a,b)\in\CA\times\CA$
such that $(a,b)\in C_{1,g}$ and $\frac{{\rm w}(a,b)}{2\cdot 2^m}\le \delta$,
which implies that $g\in \CD(\CA,*)_{(a,b)}$.
And, by Lemma~\ref{lem supp of a} and Eq.\eqref{eq supp of a},
we get
$\CA(a)=\CA(b)\in\Omega_{s_a}$
and $(a,b)\in(\tilde A\times\tilde A)^{\le\delta}$,
where $A={\CA(a)}$.
\end{proof}

\begin{lemma} \label{lem lem D <= delta}
Keep the assumption as in Lemma~\ref{lem D <= delta}.
Then
\begin{align} \label{eq |D < delta| =}
\big|\CD(\CA,*)^{\le\delta}\big|\,\leq \,
 2^{\hat t}\,q^{2^{m-1} + 4\cdot 2^r \big(h_q(\delta)  - \frac{1}{4}\big) }.
\end{align}
\end{lemma}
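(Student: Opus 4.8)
The plan is to feed the set-theoretic covering from Lemma~\ref{lem D <= delta} into a union bound and then estimate each of the three layers of the union separately. Concretely, I would begin with
$$
\big|\CD(\CA,*)^{\le\delta}\big| \le
 \sum_{s=1}^{\hat t}\;\sum_{A\in\Omega_{s}}\;
 \sum_{(a,b)\in(\tilde A\times\tilde A)^{\le\delta}}
 \big|\CD(\CA,*)_{(a,b)}\big|,
$$
and reduce the whole estimate to controlling the three quantities $|\Omega_s|$, $|(\tilde A\times\tilde A)^{\le\delta}|$, and $|\CD(\CA,*)_{(a,b)}|$, each of which depends only on~$s$.

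For the innermost factor, Lemma~\ref{lem supp of a} gives $\big|\CD(\CA,*)_{(a,b)}\big|\le q^{2^r(\hat t-s_a)}$, and since $(a,b)\in\tilde A\times\tilde A$ with $A\in\Omega_s$ forces $s_a=s$, this is $q^{2^r(\hat t-s)}$. For the middle layer I would bound $|(\tilde A\times\tilde A)^{\le\delta}|\le|(A\times A)^{\le\delta}|$ and then invoke the balanced-code estimate of Lemma~\ref{lem balance}: as $A\in\Omega_s$ is a ``$*$''-stable ideal with $\dim_F A=s\cdot 2^{r+1}$ (using $\dim_F\CA\widehat e_i=2^{r+1}$ from Remark~\ref{rk r hat t ...}(2)), we get $\dim_F(A\times A)=s\cdot 2^{r+2}$ and hence $|(A\times A)^{\le\delta}|\le q^{s\cdot 2^{r+2}h_q(\delta)}$. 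For the outer layer, $\Omega_s$ is exactly the set of ways to pick $s$ of the $\hat t$ minimal ``$*$''-stable ideals $\CA\widehat e_1,\dots,\CA\widehat e_{\hat t}$, so $|\Omega_s|=\binom{\hat t}{s}$.

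Multiplying the three estimates, the summand becomes $\binom{\hat t}{s}\,q^{s\cdot 2^{r+2}h_q(\delta)}\,q^{2^r(\hat t-s)}$. The key algebraic move is to pull out the $s$-free factor $q^{2^r\hat t}$ and record that $2^r\hat t=2^r\cdot 2^{m-r-1}=2^{m-1}$, so the remaining $s$-dependence collects into $\beta^s$ with $\beta=q^{4\cdot 2^r(h_q(\delta)-\frac14)}$. Here the standing hypothesis $h_q(\delta)<\frac14$ is decisive: it makes $\beta<1$, whence $\beta^s\le\beta$ for every $s\ge1$, and the binomial sum collapses as $\sum_{s=1}^{\hat t}\binom{\hat t}{s}\beta^s\le\beta\sum_{s=0}^{\hat t}\binom{\hat t}{s}=2^{\hat t}\beta$. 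Assembling the factors yields precisely $2^{\hat t}\,q^{2^{m-1}+4\cdot 2^r(h_q(\delta)-\frac14)}$.

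The routine parts are the dimension bookkeeping and the identity $2^r\hat t=2^{m-1}$. The one step that requires genuine care --- the main obstacle --- is organizing this double collapse of the binomial sum: one must notice that the exponent $4\cdot 2^r(h_q(\delta)-\frac14)$ is negative exactly under the assumption $h_q(\delta)<\frac14$, which is what licenses replacing $\beta^s$ by its worst case $\beta$ (the term $s=1$) while absorbing the combinatorial count into the single clean factor $2^{\hat t}$. Everything else is substitution.
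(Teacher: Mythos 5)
Your proposal is correct and follows essentially the same route as the paper: the same three-layer union bound from Lemma~\ref{lem D <= delta}, the bounds $|\CD(\CA,*)_{(a,b)}|\le q^{2^r(\hat t-s)}$ from Lemma~\ref{lem supp of a} and $|(A\times A)^{\le\delta}|\le q^{s\cdot 2^{r+2}h_q(\delta)}$ from Lemma~\ref{lem balance}, the identity $\hat t\,2^r=2^{m-1}$, and the observation that $h_q(\delta)<\frac14$ lets one replace the $s$-dependent factor by its $s=1$ value before summing $\binom{\hat t}{s}$ to at most $2^{\hat t}$. Your factoring into $\beta^s$ with $\beta=q^{4\cdot 2^r(h_q(\delta)-\frac14)}$ is only a cosmetic repackaging of the paper's exponent manipulation, so no substantive difference remains.
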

\begin{proof}
By Lemma~\ref{lem D <= delta} and Lemma~\ref{lem supp of a},
\begin{align*}
&\big|\CD(\CA,*)^{\le\delta}\big|
~\le~ \sum_{s=1}^{\hat t}~\sum_{A\in\Omega_{s}}
\sum_{(a,b)\in (\tilde A\times\tilde A)^{\le\delta}}  \big|\CD(\CA,*)_{(a,b)}\big| \\
&\le \sum_{s=1}^{\hat t}~\sum_{A\in\Omega_{s}}
\sum_{(a,b)\in (\tilde A\times\tilde A)^{\le\delta}} q^{2^r(\hat t-s)}
=\sum_{s=1}^{\hat t}~\sum_{A\in\Omega_{s}}
|(\tilde A\times\tilde A)^{\le\delta}|\cdot q^{2^r (\hat t-s)}.
\end{align*}
Using Eq.\eqref{eq tilde A} yields
$|(\tilde A\times\tilde A)^{\le\delta}|\le|(A\times A)^{\le\delta}|$.
For $A\in\Omega_s$, $\dim_F(A\times A)=2\cdot s\cdot 2^{r+1}=s\cdot 2^{r+2}$.
By Lemma~\ref{lem balance},
\begin{align*}
|(\tilde A\times\tilde A)^{\le\delta}|\cdot q^{2^r({\hat t}-s)}
&\le |(A\times A)^{\le\delta}|\cdot q^{2^r({\hat t}-s)}
\le q^{s\cdot 2^{r+2} h_q(\delta)}\cdot q^{2^r({\hat t}-s)} \\
&=q^{ 4\cdot s 2^r h_q(\delta) + \hat t\, 2^{r}  - s 2^r}
  = q^{\hat t\, 2^{r} + 4 s 2^r \big(h_q(\delta)  - \frac{1}{4}\big) }.
\end{align*} 
Note that $\hat t=2^{\mu_q-2}$ and $r=m-\mu_q+1$, see Eq.\eqref{eq hat t= r=...}.
So
\begin{align} \label{eq hat t 2^r}
 \hat t\, 2^r=2^{\mu_q-2}\,2^{m-\mu_q+1}=2^{m-1}.
\end{align}
Thus
\begin{align*}
\big|\CD(\CA,*)^{\le\delta}\big|
&\leq \sum_{s=1}^{\hat t}~\sum_{A\in\Omega_{s}}
 q^{2^{m-1}  + 4 s 2^r \big(h_q(\delta)  - \frac{1}{4}\big) }\\
&=\sum_{s=1}^{\hat t} |\Omega_s|\cdot
 q^{2^{m-1} + 4 s 2^r \big(h_q(\delta)  - \frac{1}{4}\big) }.
\end{align*}
Since $h_q(\delta)  - \frac{1}{4}<0$ and $s\ge 1$,
we deduce that
$4 s 2^r \big(h_q(\delta)  - \frac{1}{4}\big)
 \leq 4\cdot 2^r \big(h_q(\delta)  - \frac{1}{4}\big) $. Hence
\begin{align*}
\big|\CD(\CA,*)^{\le\delta}\big|
\le \Big(\sum_{s=1}^{\hat t} |\Omega_s|\Big)\cdot
 q^{2^{m-1} + 4\cdot 2^r \big(h_q(\delta)  - \frac{1}{4}\big) }.
\end{align*}
From the definition of $\Omega_s$,
we get $|\Omega_s|=\binom{\hat t}{s}$, and then
$$\sum_{s=1}^{\hat t} |\Omega_s|=\sum_{s=1}^{\hat t} \binom{\hat t}{s}
= 2^{\hat t}-1\le 2^{\hat t}.$$
The inequality Eq.\eqref{eq |D < delta| =} holds.
\end{proof}

By \cite[Lemma 4.7]{LF22},
if integers $k_i\geq \log_{q}(v)$ for $i=1, \dots, v$,
then
\begin{align} \label{eq LF22 4.7}
\begin{array}{l}
 (q^{k_1}-1) \dots (q^{k_v}-1)\geq q^{k_1+\dots+k_v -2}.
\end{array}
\end{align}

\begin{lemma} \label{D(A,*) >=}
Assume that  $m\geq 2\mu_q$,
where $\mu_q$ is defined in Eq.\eqref {eq m<mu_q}.
Then
$$\big|\CD(\CA,*)\big|\ge q^{-2} q^{2^{m-1}}.$$
\end{lemma}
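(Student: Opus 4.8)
The plan is to combine the exact count of $\CD(\CA,*)$ from Lemma~\ref{lem |D(A,*)|=}(2) with the elementary inequality recorded in Eq.~\eqref{eq LF22 4.7}. First I would observe that the hypothesis $m\ge 2\mu_q$ in particular gives $m\ge\mu_q$, so Remark~\ref{rk r hat t ...}(3) applies: one has $q\,{\not\equiv}-1\pmod{2^{m+1}}$, and by Eq.~\eqref{eq r=...} the relevant parameters are $r=m-\mu_q+1$ and $\hat t=2^{\mu_q-2}$. Consequently Lemma~\ref{lem |D(A,*)|=}(2) yields
$$
 \big|\CD(\CA,*)\big|=(q^{2^r}-1)^{\hat t},
$$
a product of $\hat t$ equal factors, each of the form $q^{2^r}-1$.

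Next I would verify the size condition needed to invoke Eq.~\eqref{eq LF22 4.7}, namely that each exponent $k_i=2^r$ satisfies $k_i\ge\log_q(v)$ with $v=\hat t$. Since $q$ is odd we have $q\ge 3$, hence $\log_q 2<1$ and therefore $\log_q\hat t=(\mu_q-2)\log_q 2<\mu_q$. On the other hand, the assumption $m\ge 2\mu_q$ gives $m-\mu_q+1\ge\mu_q+1$, so that
$$
 2^r=2^{\,m-\mu_q+1}\;\ge\;2^{\,\mu_q+1}\;\ge\;\mu_q\;>\;\log_q\hat t.
$$
This is exactly the step where the stronger hypothesis $m\ge 2\mu_q$ (rather than merely $m\ge\mu_q$) is required, and I expect it to be the only real obstacle: it forces the common exponent $2^r$ to dominate $\log_q\hat t$ uniformly in $\mu_q$, whereas $m\ge\mu_q$ alone would only guarantee $2^r\ge 2$, which can fail the bound once $\mu_q$ is large.

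With the hypothesis verified, I would apply Eq.~\eqref{eq LF22 4.7} with $v=\hat t$ and $k_1=\cdots=k_{\hat t}=2^r$ to obtain
$$
 \big|\CD(\CA,*)\big|=(q^{2^r}-1)^{\hat t}\;\ge\;q^{\,\hat t\, 2^r-2}.
$$
Finally I would substitute Eq.~\eqref{eq hat t 2^r}, which records $\hat t\, 2^r=2^{m-1}$, to conclude
$$
 \big|\CD(\CA,*)\big|\;\ge\;q^{\,2^{m-1}-2}=q^{-2}\,q^{\,2^{m-1}},
$$
as claimed. Apart from the size verification of the second paragraph, every step is a direct substitution into previously established identities, so I would keep the write-up short and route the single nontrivial estimate through the explicit chain of inequalities above.
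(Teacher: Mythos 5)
Your proof is correct and follows essentially the same route as the paper: both invoke the exact count $|\CD(\CA,*)|=(q^{2^r}-1)^{\hat t}$ from Lemma~\ref{lem |D(A,*)|=}(2), verify the hypothesis of Eq.~\eqref{eq LF22 4.7} using $m\ge 2\mu_q$, and conclude via $\hat t\,2^r=2^{m-1}$ from Eq.~\eqref{eq hat t 2^r}. The only difference is cosmetic: the paper checks $\log_q(\hat t)\le \hat t=2^{\mu_q-2}\le 2^{m-\mu_q+1}=2^r$, while you route the same verification through $\log_q\hat t<\mu_q\le 2^{\mu_q+1}\le 2^r$.
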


\begin{proof}
Note that $r=m-\mu_q+1$ and $\hat t=2^{\mu_q-2}$ see Eq.\eqref{eq hat t= r=...}.
From the assumption ``$m\ge 2\mu_q$'', we deduce that $\mu_q-2< m-\mu_q+1$,
and then
$$
 \log_q(\hat t)\le \hat t =  2^{\mu_q-2}\le 2^{m-\mu_q+1}=2^r.
$$
Thus we can quote Eq.\eqref{eq LF22 4.7} and Lemma \ref{lem |D(A,*)|=}(2) to get that
$$
\big|\CD(\CA,*)\big|=(q^{2^r}-1)^{\hat t} \ge q^{\hat t\,2^r -2}.
$$
By Eq.\eqref{eq hat t 2^r}, the inequality of the lemma holds.
\end{proof}

As mentioned in the beginning of this section, to prove Theorem~\ref{int asymptotic},
it is enough to consider the case that $q$ is odd.

\begin{theorem} \label{thm asymptotic}
Assume that $q$ is odd. Let $\delta\in(0,1-q^{-1})$ and $h_q(\delta)<\frac{1}{4}$.
Then there are self-dual $2$-quasi negacyclic codes
$C_1 ,C_2, \dots$ over $F$ (hence the rate ${\rm R}(C_i)=\frac{1}{2}$) such that
the code length $2n_i$ of $C_i$ goes to infinity
and the relative minimum weight $\Delta(C_i)>\delta$ for $i=1,2,\dots$.
\end{theorem}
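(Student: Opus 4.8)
The plan is a counting argument: I compare the total number of self-dual codes produced by Lemma~\ref{lem * and self-dual} with the number of those whose relative minimum weight is at most $\delta$, and show the latter is strictly smaller once $m$ is large. Throughout I would take $\CA=F[X]\big/\langle X^{2^m}+1\rangle$ with $m\ge 2\mu_q$, so that both Lemma~\ref{lem lem D <= delta} and Lemma~\ref{D(A,*) >=} apply. By Lemma~\ref{lem * and self-dual} each $g\in\CD(\CA,*)$ gives a self-dual $2$-quasi negacyclic code $C_{1,g}$ of length $2\cdot 2^m$, and the ``bad'' ones (those with $\Delta(C_{1,g})\le\delta$) are precisely the elements of $\CD(\CA,*)^{\le\delta}$; it therefore suffices to exhibit one $g$ that is not bad.

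First I would invoke Lemma~\ref{D(A,*) >=} for the lower bound $|\CD(\CA,*)|\ge q^{-2}q^{2^{m-1}}$, and Lemma~\ref{lem lem D <= delta} for the upper bound $|\CD(\CA,*)^{\le\delta}|\le 2^{\hat t}\,q^{2^{m-1}+4\cdot 2^r(h_q(\delta)-\frac14)}$. The key observation is that both estimates share the factor $q^{2^{m-1}}$, so establishing $|\CD(\CA,*)^{\le\delta}|<|\CD(\CA,*)|$ reduces, after cancelling this factor, to the inequality $2^{\hat t}\,q^{4\cdot 2^r(h_q(\delta)-\frac14)}<q^{-2}$, equivalently $\hat t\log_q 2+4\cdot 2^r\big(h_q(\delta)-\tfrac14\big)<-2$.

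Here the hypothesis $h_q(\delta)<\frac14$ is exactly what makes the argument run. The coefficient $h_q(\delta)-\frac14$ is a fixed negative constant, and by Remark~\ref{rk r hat t ...}(3) we have $\hat t=2^{\mu_q-2}$ independent of $m$ while $2^r=2^{m-\mu_q+1}\to\infty$ as $m\to\infty$. Hence the left-hand side tends to $-\infty$, so for all sufficiently large $m$ the displayed inequality holds, giving $|\CD(\CA,*)^{\le\delta}|<|\CD(\CA,*)|$. For any such $m$ there is then a $g\in\CD(\CA,*)$ lying outside $\CD(\CA,*)^{\le\delta}$, that is, a self-dual $2$-quasi negacyclic code $C_{1,g}$ over $F$ of length $2\cdot 2^m$ with $\Delta(C_{1,g})>\delta$ and rate $\frac12$.

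To finish I would choose an increasing sequence $m_1<m_2<\cdots$ of integers each large enough for the above, producing codes $C_1,C_2,\dots$ with lengths $2\cdot 2^{m_i}\to\infty$ and $\Delta(C_i)>\delta$, which is the assertion of the theorem. The substantive work is already carried by the two preceding lemmas; the only genuine obstacle remaining here is the asymptotic bookkeeping, namely checking that the shared exponent $q^{2^{m-1}}$ cancels cleanly so that the unbounded negative term $4\cdot 2^r(h_q(\delta)-\frac14)$ dominates the fixed quantities $\hat t\log_q 2$ and the constant slack $q^{-2}$. This domination is exactly the phenomenon that the number $\hat t$ of minimal ``$*$''-stable ideals of $\CA$ stays bounded while their common dimension $2^{r+1}$ grows without bound.
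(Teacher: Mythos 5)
Your proposal is correct and follows essentially the same route as the paper's own proof: both combine the upper bound of Lemma~\ref{lem lem D <= delta} with the lower bound of Lemma~\ref{D(A,*) >=} for $m\ge 2\mu_q$, observe that $\hat t=2^{\mu_q-2}$ is fixed while $2^r=2^{m-\mu_q+1}\to\infty$ with $h_q(\delta)-\frac14<0$, and conclude that $\CD(\CA,*)^{\le\delta}\subsetneqq\CD(\CA,*)$ for large $m$, then pick $g$ outside the bad set to obtain $C_{1,g}$ via Lemma~\ref{lem * and self-dual}. The only cosmetic difference is that you cancel the common factor $q^{2^{m-1}}$ and compare counts directly, whereas the paper lets the ratio $|\CD(\CA,*)^{\le\delta}|/|\CD(\CA,*)|$ tend to $0$; these are the same argument.
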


\begin{proof}
Let $m\geq 2\mu_q$ and $m$ goes to infinity. By Lemma \ref{lem lem D <= delta}
and Lemma~\ref{D(A,*) >=},
\begin{align*}
 \frac{|\CD(\CA,*)^{\le\delta}|}{|\CD(\CA,*)|}
\leq
\frac{2^{\hat t}\,q^{2^{m-1} + 4\cdot 2^r \big(h_q(\delta)  - \frac{1}{4}\big) }}
{q^{-2}q^{2^{m-1}}}
=
2^{\hat t}\,q^2\, q^{ 4\cdot 2^r \big(h_q(\delta)  - \frac{1}{4}\big) }.
\end{align*}
Note that $\hat t=2^{\mu_q-2}$ is independent of the choice of $m$,
and $2^r = 2^{m-\mu_q+1}\to\infty$ while $m\to\infty$, see Eq.\eqref{eq hat t= r=...}.
Since $h_q(\delta)-\frac{1}{4}<0$, 
we get that
\begin{align*}
\lim_{m\to\infty} \frac{|\CD(\CA,*)^{\le\delta}|}{|\CD(\CA,*)|}
\leq
 \lim_{m\to\infty}
2^{\hat t}\,q^2\,
 q^{ 4\cdot 2^r \big(h_q(\delta)  - \frac{1}{4}\big) }
=0.
\end{align*}
Therefore, there are integers $m_1,m_2,\dots$ going to infinity
such that
$$
 \CD(\CA^{(i)},*)^{\le\delta}\,\subsetneqq\, \CD(\CA^{(i)},*),
  ~~ i=1,2,\dots,
$$
where $\CA^{(i)}=F[X]\big/\langle X^{2^{m_i}+1}\rangle$.
We can take
$$
 g^{(i)}\in \CD(\CA^{(i)},*)\setminus \CD(\CA^{(i)},*)^{\le\delta}, ~~~~~i=1,2,\dots.
$$
So that the self-dual $2$-quasi negacyclic codes
$C_i=C_{1,g^{(i)}}$ over $F$ of length $2n_i=2\cdot 2^{m_i}$ goes to infinity
 satisfy that $\Delta(C_i)>\delta$ for all $i=1,2,\dots$.
\end{proof}

\section{Conclusions}\label{Conclusions}

The main purpose of this paper is to investigate
the existence and the asymptotic property of self-dual $2$-quasi negacyclic codes.
When $q$ is even, the self-dual $2$-quasi negacyclic codes over $F$
are just the self-dual $2$-quasi cyclic codes
whose existence and the asymptotic properties are studied in \cite{AOS,L PhD,LF22,LS01,LS03} etc.

In this paper, we primarily concern the case that $q$ is odd.
First, we introduced the operator ``$*$'' on
the quotient algebra $\CR=F[X]/\langle X^{n}+1\rangle$,
which is an algebra automorphism of order $2$,
see Lemma \ref{lem order of *}. 
Specifically, we denote $\CA=F[X]/\langle X^{2^m}+1\rangle$ with $m\geq 1$.
By analysing the behaviour of the primitive idempotents
of $\CA$ 
under the operator ``$*$'',
we obtained the decomposition of $\CA$, see Eq.\eqref{eq A=A widehat e},
which is the key to investigate the existence
and the asymptotic property of self-dual $2$-quasi negacyclic codes.

When $q$ is odd,
we considered the existence of self-dual $2$-quasi negacyclic codes
in two cases: $n$ is odd or $n$ is even. Precisely,
when $n$ is odd, by an algebra isomorphism
$\eta^{(2)}$ (see Eq.\eqref{eq eta 2}) which introduced in \cite{FL24},
we proved that $q$-ary
self-dual $2$-quasi negacyclic codes of length $2n$ exist
if and only if $q\,{\not\equiv}\,-1~({\rm mod}~4)$,
see Theorem \ref{th existence odd}.
When $n$ is even,
we first exhibited an estimation of the number of
the self-dual $2$-quasi negacyclic codes of length $2\cdot2^m$ with ${m\geq 1}$,
see Lemma~\ref{lem |D(A,*)|=},
and then proved the self-dual $2$-quasi negacyclic codes of length~$2n$
 always exist, see Theorem \ref{thm existence even}.
Further, we calculated the number of the
self-dual $2$-quasi negacyclic codes whose relative minimum weight are small,
see Lemma~\ref{lem D <= delta} and Lemma~\ref{lem lem D <= delta}.
In this way, we obtained that
$q$-ary self-dual negacyclic codes are asymptotically good,
see Theorem~\ref{thm asymptotic}.


\end{document}